\newtheorem{theorem}{Theorem}
\newtheorem{corollary}{Corollary}
\newcommand{\reals}{\mathbb{R}}
\newcommand{\CH}{\mathcal{H}}
\newcommand{\complex}{\mathbb{C}}
\newcommand{\CA}{\mathcal{A}}
\newcommand{\CX}{\mathcal{X}}
\newcommand{\CE}{\mathcal{E}}
\newcommand{\CLind}{\mathcal{L}_{(\alpha,\lambda,\phi)}}
\newcommand{\EX}{\mathbb{E}}
\newcommand{\Tr}{\operatorname{Tr}}
\newcommand{\PR}{\mathbb{P}}
\newcommand{\pdom}{D}
\newcommand{\pstate}{\rho_t}
\newcommand{\dcost}{\gamma}
\newcommand{\NumAct}{A}
\tikzstyle{pinstyle} = [pin edge={to-,thin,black}]
\begin{document}
%
\title{Quickest Detection for Human-Sensor Systems using Quantum Decision Theory}
%
%
%

\author{Luke~Snow,
        Vikram~Krishnamurthy,~\IEEEmembership{Fellow,~IEEE,}
        and~Brian~M.~Sadler,~\IEEEmembership{Life Fellow,~IEEE}
\thanks{Luke Snow and Vikram Krishnamurthy are with the Department
of Electrical and Computer Engineering, Cornell University, Ithaca,
NY, 14853 USA}
\thanks{Brian M. Sadler is with the U.S. Army Research Laboratory, Adelphi,
MD 20783 USA}

\thanks{This research was funded by National Science Foundation grant CCF-2112457,  Army Reesarch office grant W911NF-21-1-0093 , and Air Force Office of Scientific Research grant FA9550-22-1-0016.}}

\maketitle

\begin{abstract}
In mathematical psychology, recent models for human decision-making use Quantum Decision Theory to capture important human-centric features such as order effects and violation of the sure-thing principle (total probability law).
We construct and analyze a human-sensor system where a quickest detector aims to detect a change in an underlying state by observing human decisions that are influenced by the state.
   Apart from providing an analytical  framework for such human-sensor systems, we also analyze the structure of the quickest detection policy. We show that the quickest detection policy has a single threshold and the optimal cost incurred is lower bounded by that of the classical quickest detector. This indicates that intermediate human decisions strictly hinder detection performance. We also analyze the sensitivity of the quickest detection cost with respect to the quantum decision parameters of the human decision maker, revealing that the performance is robust to inaccurate knowledge of the decision-making process. Numerical results are provided which suggest that observing the decisions of more rational decision makers will improve the quickest detection performance. Finally, we illustrate a numerical implementation of this quickest detector in the context of the Prisoner's Dilemma problem, in which it has been observed that Quantum Decision Theory can uniquely model empirically tested violations of the sure-thing principle. 
\end{abstract}

\begin{IEEEkeywords}
Quickest Change Detection, Quantum Decision Making, Blackwell Dominance, Human-Sensor Interface
\end{IEEEkeywords}

%
\IEEEpeerreviewmaketitle

\section{Introduction}
%
%
%
%
In this paper we construct and analyze a sequential quickest detection framework which aims to detect a change in an underlying state by observing human decisions that  are influenced by the state. We incorporate a recently proposed human decision-making model from mathematical psychology which uses quantum probability to capture salient properties of human decision making which cannot be explained by classical expected utility or Markov models. Specifically, such quantum decision theories capture {\em order effects}
(humans  perceive $P(H|A\cap B) $ and $P(H|B \cap A)$ differently in decision making)
, violation of the {\em sure-thing principle} (human perception of probabilities in decision making violates the total probability rule), and temporal oscillations in decision preferences. We use the framework of \cite{martinez2016quantum} which models the human \textit{psychological state} as a time-evolving open-quantum system, which reaches a steady-state when deliberation has ceased.  

{\em Remark}. Quantum Decision Theory (QDT) models in  psychology do not claim that the brain is acting as a quantum device in any physical sense. Instead QDT  serves as a {\em parsimonious generative blackbox model} for human decision making that is backed up by experimental studies \cite{kvam2021temporal}, \cite{busemeyer2012quantum}.

The problem of 'quickest detection' \cite{shiryaev1963optimum} is fundamental to statistical signal processing \cite{TM10}, \cite{https://doi.org/10.48550/arxiv.2110.01581}, \cite{PH08}, and has applications in monitoring power networks \cite{chen2015quickest}, sensor networks \cite{raghavan2010quickest}, internet traffic \cite{lakhina2004diagnosing}, epidemic detection \cite{baron2004early}, genomic signal processing \cite{shen2012change}, seismology \cite{amorese2007applying}, and wireless communications \cite{lai2008quickest}. Quickest detection can be classified into \textit{non-parametric} and \textit{Bayesian} frameworks. Non-parametric approaches include the Cumulative Sum (CUSUM) \cite{page1954continuous} and Shiryaev-Roberts Procedure \cite{xie2021sequential}, which do not assume a \textit{prior} (distribution) on the change point time. Bayesian quickest detection utilizes a prior-posterior updating scheme and an assumed distribution for the change point \cite{xie2021sequential}.
In this paper, we consider \textit{Bayesian} quickest detection, in which the observed signals are \textit{human decisions} generated by a quantum decision maker. This problem of detecting a state change from the observation of human decisions is widespread, and includes contexts such as detecting a market shock by observing individual financial investment decisions, sentiment change through social media monitoring, or adversarial group strategy change through individual decision monitoring. 
We provide several structural results which characterize the optimal detection performance of the analyst who attempts to detect an underlying state change by observing human decisions. 

\begin{figure}[h!]
    \centering
    \includegraphics[height = 60mm]{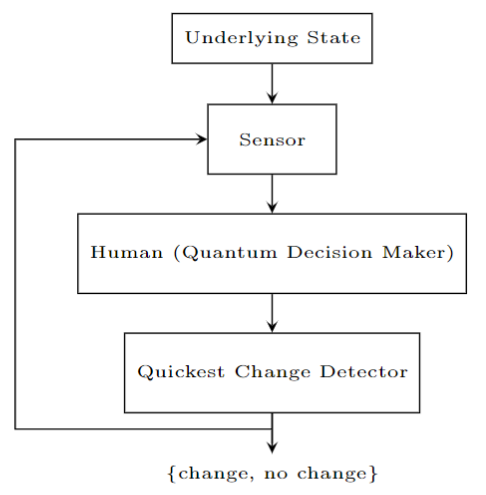}
    \caption{Sequential Quickest Change Detection with Human Decision Makers. Illustrated is the information flow for one time step $n$. This process repeats sequentially over discrete time $n=1,\dots$, with the state of nature jump changing at some unknown time, until the Quickest Detector declared 'change'.}
    \label{fig:Tikz_conc}
\end{figure}

\subsection{Human Sensor Based Change Point Detection}

The quickest detection framework of this paper is schematically illustrated in Fig.~\ref{fig:Tikz_conc}. An underlying state (e.g. asset value, etc.) changes at a geometrically distributed unknown time. At each time instant, a sensor obtains a noisy measurement of the underlying state (asset value, etc.), computes the posterior probability of the state, and provides this information to a human decision maker (e.g. as a recommendation). The human uses this information to choose an action at each time instant according to the quantum decision theory. These human decisions are monitored by a Quickest Change Detector, which computes a belief in the underlying state by exploiting knowledge of the quantum decision parameters. Based on the computed belief, the Quickest Detector then decides to continue or declares that a change has occurred, in which case the problem terminates.

\subsection{Context and Literature}
\subsubsection{Detection Theory for Human-Sensor Interaction}

The study of the interaction between sensor interfacing and human decision making demands utilization of tools from both statistical signal processing and behavioral economics or mathematical psychology, and specific examples which exploit models for human decision making can be found in robotics \cite{askarpour2019formal},  interactive marketing/advertising \cite{belanche2020consumer}, recommender systems \cite{lu2012recommender} and control of game-theoretic economic interactions \cite{9280374}.

.

One important  problem lying in this intersection is that of sequential change detection involving human decision makers, hereafter referred to as social sequential change detection. 
This problem has been studied previously (See \cite{krishnamurthy2021quickest} \cite{6205386}, \cite{6482232} and references therein) using models from behavioral economics and social learning. Recently, QDT  models for human decision making have been developed which account for a wide range of human decision making phenomena than traditional micro-economic models. 

\subsubsection{Quantum Decision Theory}
Generative models for human decision making are studied extensively in behavioral economics and psychology.
 The classical  formalisms  of human decision making are the Expected Utility models of Von-Neumann and Morgenstern (1953)\cite{morgenstern1953theory} and Savage (1954) \cite{savage1951theory}. Despite the successes of these models, numerous experimental findings, most notably those of Kahneman and Tverksy \cite{kahneman1982judgment}, have demonstrated violations of the proposed decision making axioms. There have since been subsequent efforts to develop axiomatic systems which encompass wider ranges of human behavior, such as the Prospect Theory \cite{kahneman2013prospect}.
Quantum Decision Theory (\cite{busemeyer2012quantum}, \cite{khrennikov2010ubiquitous}, \cite{yukalov2010mathematical} and references therein) has emerged as a new paradigm which is capable of generalizing current models and accounting for certain violations of axiomatic assumptions. For example, it has been empirically shown that humans routinely violate Savage's 'Sure-Thing Principle' \cite{khrennikov2009quantum}, \cite{aerts2011quantum}, which is equivalent to violation of the law of total probability, and that human decision making is affected by the order of presentation of information \cite{trueblood2011quantum} \cite{busemeyer2011quantum} ("order effects"). These violations are natural motivators for treating the decision making agent's mental state as a quantum state in Hilbert Space; The mathematics of quantum probability was developed as an explanation of observed self-interfering and non-commutative behaviors of physical systems, directly analogous to the findings which Quantum Decision Theory (QDT) aims to treat. Indeed, the models of Quantum Decision Theory have been shown to reliably account for violations of the 'Sure Thing Principle' and order effects \cite{busemeyer2012quantum}.
 

Within QDT, several recent advances have utilized quantum dynamical systems to model time-evolving decision preferences. The classical model for this type of time-evolving mental state is a Markovian model, but in \cite{busemeyer2009empirical} an alternative formulation based on Schr\"{o}dinger's Equation is developed. This model is shown to both reconcile observed violations of the law of total probability via quantum interference effects and model choice-induced preference changes via quantum projection. This is further advanced in \cite{asano2012quantum}, and  \cite{martinez2016quantum} where the mental state is modeled as an open-quantum system. This open-quantum system representation allows for a generalization of the widely used Markovian model of preference evolution, while maintaining these advantages of the quantum framework. Busemeyer et. al. \cite{kvam2021temporal} provide empirical analysis which supports the use of open-quantum models and conclude "An open system model that incorporates elements of both classical and quantum dynamics provides the 
best available single system account of these three characteristics—evolution, oscillation, and choice-induced 
preference change". 

The appeal of the quantum probabilistic model \cite{martinez2016quantum} is that it both provides a generalized decision making process which can account for certain empirically observed decision making phenomena
and it provides a quantitative way of reasoning about effects of cognitive biases and suboptimalities, such as bounded rationality, through the free-parametrization. We have also recently utilized this model within a human-machine assisted decision making scheme, in which a machine provides input signals to a human to dynamically steer the human's decisions towards optimality \cite{https://doi.org/10.48550/arxiv.2205.12378}. 

\subsection{Main Results and Outline}
In section \ref{sec:DQT} we outline the mathematical construction of the quantum decision making process \cite{martinez2016quantum}. Section \ref{sec:QCD} outlines the social sequential change detection protocol and the computation of the quickest detector's optimal policy, with the incorporation of the human decision making model of \cite{martinez2016quantum}. Sections \ref{sec:results} and \ref{sec:numerical} present our main theoretical results and computational validations regarding the quickest detector's performance and optimal policy resulting from our model. Specifically, we derive results for
    \begin{enumerate}
        \item \textit{Existence of a threshold optimal policy} (Theorem~\eqref{thm:sing_thres}): The quickest detector's optimal policy exhibits a single-threshold. This is in contrast to the multi-threshold optimal policy present in the social sequential change detection model of \cite{krishnamurthy2021quickest}. 
        \item \textit{Intermediate human decisions hinder detection performance} (Theorem~\eqref{thm:lower_bound}): Under the optimal policy, the quickest detector performs strictly worse in expectation than in the classical quickest change detection protocol. We note that this argument applies independently from the 'quantum' structure of the decision making process. However, this still can provide a useful lower bound on performance. 
        \item \textit{Sensitivity of detection performance to psychological parameters} (Theorem~\eqref{thm:estimate}): We provide an upper bound on the expected cost incurred by the quickest detector when only an estimate of the quantum decision maker's psychological parameters is available. This is useful as in any practical implementation one would work with a probabilistic estimate of these parameters.
        \item \textit{Detection performance depends on agent rationality} (Theorem \ref{thm:val_great}, Theorem \ref{thm:convex_dom}, computational results): We show that there exist disjoint convex regions in the psychological parameter space which induce performance ordering, i.e. the quickest detector performs strictly better when the decision maker has parameters in one region vs. the other. We provide a numerical simulation which validates this existence and suggests that the quickest detector performs better as the decision maker becomes more rational.
    \end{enumerate}
Along with these results, in Section~\ref{pris_dil} we provide a numerical example of the quickest detection scheme in the context of the Prisoner's Dilemma problem. The ability of the quantum model to account for empirically observed violations of the sure-thing principle \cite{busemeyer2006quantum} is illustrated in this context.

\section{Quantum Model for Human Decision Making}
\label{sec:DQT}

This  section presents the open-quantum system model that we will  use to represent the decision preference evolution of the human decision maker.
We define the evolution of the density operator of the decision maker using the open-system Quantum Lindbladian Equation, proposed in \cite{martinez2016quantum} and implemented in \cite{busemeyer2020application}. Reference~\cite{kvam2021temporal} provides empirical evidence which concludes that this open-system structure is the most parsimonious model which can capture observations of dynamical preference evolution such as oscillation and choice-induced preference change. This model provides a way of representing a dynamically evolving action preference distribution. However, for our purposes we abstract away from the time-evolution such that decisions are made from the \textit{steady-state distribution}, the existence of which is proved in \cite{martinez2016quantum}. This steady-state represents the ceasing of any deliberation. 

Readers who are unfamiliar with quantum probability may refer directly to the abstracted decision protocol in Section~\ref{dec_prot} . This will be used in subsequent sections for the observation likelihood in quickest detection. The details of Sections~\ref{psych_state} and~\ref{lindblad_construct} are not necessary for a high-level understanding of the quickest detection procedure, but provide insight into the novelty of this formulation and the impact of the psychological parameters.

 \subsection{Modeling  Psychological State via Quantum Probability}
 \label{psych_state}
 
 Suppose there are $n$ underlying states in the state space $\CX$, and $\NumAct$ actions in the action space $\CA$. For each state $i \in \{1,\dots,n\}$ construct a corresponding unit complex vector $\CE_i \in \complex^n$ such that $\{\CE_i\}_{i=1}^n$ are orthonormal. For each action $i \in \{1,\dots,\NumAct\}$, construct a complex vector $a_i \in \complex^{\NumAct}$ such that $\{a_i\}_{i=1}^{\NumAct}$ are orthonormal.  Denote $\CH_{\CX} = \textrm{span}\{\CE_1,\dots,\CE_n\}, \ \CH_{\CA} = \textrm{span}\{a_1,\dots,a_{\NumAct}\}$, and form the tensor product Hilbert space $\CH = \CH_{\CX} \otimes \CH_{\CA}$. 
The agent's psychological state is represented by a density operator $\pstate$ which acts on the Hilbert space $\pstate: \CH \to \CH$. Specifically, \[\pstate = \sum_{j} p_j \ket{\psi_j} \bra{\psi_j} \textrm{ with } \sum_j p_j = 1, \ \ket{\psi_j} \in \CH \ \forall j\] 
This construction is referred to as a \textit{mixed state} in quantum mechanics. A mixed state is a generalization of a \textit{pure state} to a probability distribution over pure states. We use a mixed state representation (density operator) for the sake of generality. The psychological state $\pstate$ evolves according to the Lindbladian operator $\CLind$ by the ordinary differential equation\footnote{The reader may be familiar with the Schr\"{o}dinger equation which governs the time evolution of \textit{closed} quantum systems. The Lindbladian equation is a generalization which governs the time evolution of \textit{open} quantum systems (i.e. those that interact with an external dissipative environment). The recent literature in psychology uses the Linbladian framework to model human decision making.}
\begin{equation}
    \label{Lind_ev}
    \frac{d\pstate}{dt} =  \CLind \pstate
\end{equation} where $(\alpha,\lambda,\phi)$ are free parameters which govern the evolution, each having a psychological interpretation, see \cite{martinez2016quantum}. Implicit in $\CLind$ is a belief $\eta(x)$ in the underlying state $x \in \{1,\dots,n\}$ and a utility function $u: \CA \times \CX \to \reals$. The psychological state $\pstate$ encodes a time dependent probability distribution $\Gamma(a,t)$ over actions $a \in \CA$ in the following way. Let $P_i$ be the projector on to the subspace spanned by action vector $a_i \in \CH$, then $\Gamma(a_i,t) = \Tr(P_i \pstate P_i^{\dagger})$, where $P_i^{\dagger}$ is the adjoint of $P_i$.

\subsection{Lindbladian Operator Construction}
\label{lindblad_construct}
The evolution of the density operator is given by $
\frac{d\pstate}{dt} = \mathcal{L}_{(\alpha,\lambda,\phi)} \, \pstate
$ where
\begin{equation}
\label{Lindblad}
\begin{split}
     &\mathcal{L}_{(\alpha,\lambda,\phi)} \, \pstate =-i(1-\alpha)[H,\pstate]\\&+\alpha \sum_{m,n}\gamma_{(m,n)}\left(L_{(m,n)}\,\pstate\, L_{(m,n)}^{\dagger}-\frac{1}{2}\{L_{(m,n)}^{\dagger}L_{(m,n)},\pstate\}\right)    
\end{split}
\end{equation}
Here $[A,B] = AB - BA$, $\{A,B\} = AB + BA$, $A^*$ is complex conjugate of $A$,
${H}=\textrm{diag}({1}_m,\cdots,{1}_m)_{mn\times\,mn}$ with ${1}_m$ an $m \times m$ matrix of ones and $L_{(m,n)} = \ket{m}\bra{n}$, where $\ket{k}$ is the $k$'th basis vector of $\mathcal{H}$. The coefficient $\gamma_{(m,n)}$ is given by the $(m,n)$'th element of the \textit{cognitive matrix} $C(\lambda,\phi)$:
\begin{equation}
    \label{gamma_mn}
    \gamma_{(m,n)}:=[C(\lambda,\phi)]_{m,n}=[(1-\phi)\Pi^{T}(\lambda) + \phi B^{T}]_{m,n}
\end{equation}
For utility function $u: \CA \times \CX \rightarrow \reals$, construct
\begin{equation}
\label{subj_util}
    p(a_j|\mathcal{E}_l)=\frac{u(a_j|\mathcal{E}_l)^{\lambda}}{\sum_{j=l}^{\NumAct} u(a_j|\mathcal{E}_l)^{\lambda}}
\end{equation}
and define
\begin{equation} \label{Pi_mat}
    \begin{split}
{P}(\mathcal{E}_l)&:=\begin{bmatrix}
    p(a_1|\mathcal{E}_l) & p(a_2|\mathcal{E}_l)&\cdots& p(a_{m}|\mathcal{E}_l) \\
\end{bmatrix}\otimes{1}_{n\times1} \\
    \Pi(\lambda) &= \textrm{diag}(
        P(\mathcal{E}_1),\cdots,P(\mathcal{E}_{n}))
        \end{split}
\end{equation}
where ${1}_{n\times1}$ is a vector with all 1's and, $A \otimes B$ is the kronecker product of $A$ and $B$. Define
$\eta_k(s) = p(s | u_k, y_k)$
given the noisy observation $y_k$ and input signal $u_k$, with $s \in \CX$.
We define
\begin{equation}
    \label{B_Matrix}
    B:={\begin{bmatrix}
    \eta_k(\mathcal{E}_1) & \eta_k(\mathcal{E}_2)&\cdots&\eta_k(\mathcal{E}_{n}) 
  \end{bmatrix} }\otimes{1}_{m\times1}\otimes \mathbb{I}_{m\times m}
\end{equation}
See \cite{martinez2016quantum} for the psychological motivation behind this structure. \eqref{Lindblad} is the standard form of the Lindblad-Kossakowski ordinary differential equation, which governs the behavior of quantum systems interacting with an external environment, or 'open' quantum systems. 

\subsection{Practicality in Human Decision Making}
\label{practicality}
The above quantum model for human decision making accounts for violations of the sure-thing principle (STP), which we now describe. Suppose there exists an action \textit{a} and two states $\CE_1,\CE_2$. Suppose $\Gamma$ is a non-degenerate posterior belief (strictly in the interior of the unit simplex) of the underlying state. The violation of the sure thing principle occurs when $P(\textit{a}|\Gamma)$ is not a convex combination of $P(\textit{a}|\CE_1)$ and $P(\textit{a}|\CE_2)$, i.e. 
\[P(\textit{a}|\Gamma) \neq \epsilon\,P(\textit{a}|\CE_1) + (1-\epsilon)\,P(\textit{a}|\,\CE_2) \ \forall\, \epsilon \in (0,1)\]
Pothos and Busemeyer \cite{pothos2009quantum} (see also \cite{khrennikov2009quantum}) review  empirical evidence for the violation of STP and show how quantum models can account for it by introducing quantum interference in the probability evolution. Note that this violation cannot be accounted for by traditional models which rely on classical probability, as the sure-thing principle follows directly as a consequence of the law of total probability. 

The parameters $(\alpha,\lambda,\phi)$ also allow for practical psychological interpretation. The parameter $\alpha$ interpolates between the purely quantum preference evolution and the dissipative Markovian evolution in \eqref{Lindblad}, and thus a higher $\alpha$ corresponds to increased \textit{rationality}, in the sense of choosing actions which accord with classical expected utility maximization. $\lambda$ is a measure for \textit{bounded rationality}, as (from \eqref{subj_util}) it is a monotonic measure of the ability to discriminate between the profitability of different options. The interpretation of $\phi$ (in \eqref{gamma_mn}) is more nuanced, but can be thought of as the relevance of the formation of a belief in the underlying state to the decision making process. See \cite{martinez2016quantum}, \cite{kvam2021temporal} for  detailed discussion on these interpretations. 
  
\subsection{Decision making protocol}
\label{dec_prot}
Each quantum decision maker (human) in the sequential decision process behaves as follows. The agent has initial psychological state $\rho_0$ and utility $u: \CA \times\CX \to \reals$. An underlying state distribution $\eta(x)$ is provided by a Bayesian inference machine (Sensor). $u$ and $\eta(x)$ parameterize $\CLind$, along with psychological parameters $\alpha,\lambda,\phi$. The psychological state at time $t$, $\pstate$, evolves according to \eqref{Lind_ev} and induces a distribution $\Gamma^{\eta}(a,t)$ over the action space as 
\begin{equation}
\label{Gamma}
    \Gamma^{\eta}(a,t) = \Tr(P_a\pstate P_a^{\dagger})
\end{equation} 
By \cite{martinez2016quantum}, we are guaranteed the existence of a  \textit{steady-state distribution}
\[\Gamma^{\eta}(a) = \lim_{t \to \infty}\Gamma^{\eta}(a,t)\]
We assume action $a_n$ is taken probabilistically according to the steady-state distribution $\Gamma^{\eta}(a)$ which is independent from the initial state $\rho_0$. This represents the action choice occurring after deliberation has ended, and the steady state is typically reached relatively quickly\footnote{See \cite{martinez2016quantum} for a proof of the steady state and a discussion of relaxation times of this evolution}. We can then abstract away from the time dependence to get the map
\begin{equation}
\label{act_map}
    \CLind : (\eta(x),u(x,a)) \to \Gamma^{\eta}(a)
\end{equation}
At each discrete time point of the quickest detection protocol, the agent:
\begin{itemize}
    \item consists of initial psychological state $\rho_0$, utility $u: \CA \times \CX \to \reals$, and parametrization $(\alpha,\lambda,\phi)$. Note that these quantities are time independent and thus constant for all discrete time steps.
    \item is provided state information in the form of a Bayesian posterior $\eta_n(x)$ by the Sensor.
    \item deliberates until reaching a steady-state action distribution $\Gamma^{\eta_n}(a)$, from map \eqref{act_map}.
    \item takes action $a_n$ probabilistically from $\Gamma^{\eta_n}(a)$
\end{itemize}

\subsection{Summary} The psychological state is represented as a density operator $\pstate$ acting on the Hilbert space $\CH$, which is formed as a tensor product of vector spaces spanned by orthonormal state and action vectors. This representation of the psychological state encodes quantum 'amplitudes' over joint state-action pairs. The psychological state $\pstate$ evolves according to \eqref{Lind_ev}, where $\CLind$ is the quantum Lindbladian operator, and is constructed in a specific way \cite{martinez2016quantum} to reflect a psychological preference evaluation process. The psychological state evolves until it reaches a steady-state, corresponding to a halting of any further deliberation, and at which point a decision is taken probabilistically. Thus, we can abstract away from the time evolution and represent the decision making process by the map \eqref{act_map}. This 'quantum' psychological preference evolution acts as black-box model which generalizes the analogous classical Markovian preference evolution model.

\section{Quickest Change Detection with Quantum Decision Maker}
\label{sec:QCD}

We now introduce the quickest change detection protocol and the formulation of an optimal policy for such a protocol.  
The aim of quickest detection is to determine the jump time $\tau^0$ of the state of nature $\{x_n\}$ i.e., evaluate the optimal stationary policy $\mu^*$ of the global decision maker that minimizes the Kolmogorov-Shiryaev criterion for detection of disorder:
\begin{align}
\begin{split}
\label{opt_pol}
    &J_{\mu^*}(\pi) = \inf_{\mu} J_{\mu}(\pi), \\
    &J_{\mu}(\pi) = d\EX_{\mu}[(\tau-\tau^0)^+] + f\PR_{\mu}(\tau < \tau^0)
    \end{split}
\end{align}
where $\tau = \inf\{n : u_n = 1 \}$ is the time at which the global decision maker announces the change. The parameters $d$ and $f$ specify the delay penalty and false alarm penalty, respectively. 

The optimal policy $\mu^{*}(\pi)$ \eqref{opt_pol} can be formulated as the solution of a stochastic dynamic programming equation. The quickest detection problem \eqref{opt_pol} is an example of a stopping-time partially observed Markov decision process (POMDP) with a stationary optimal policy.

We now introduce some notation, then describe the protocol in detail.
\begin{enumerate}[label=\roman*)]
    \item The state of nature $\{x_n \in \{1,2\}, n \geq 0\}$ models the change event which we aim to detect. $x_n$ starts in state $2$ and jumps to state $1$ at a geometrically distributed random time $\tau^0$ with $\EX[\tau^0] = \frac{1}{1-p}$ for some $p \in [0,1)$. So, $\{x_n\}$ is a 2-state Markov chain with absorbing transition matrix and initial probability
    \begin{equation}
    \label{trans_mat}
        P =  \left[\begin{matrix}  
        1 & 0 \\  
        1-p & p   
        \end{matrix} \right] 
        , \ \pi_0 = \left[\begin{matrix}  
        0 \\  
        1   
        \end{matrix} \right] 
    \end{equation}
    with change time $\tau^0 = \inf\{n: x_n=1\}$.
    \item The quantum decision agents act sequentially. A sensor observe the state of nature $x_n$ in noise and computes a Bayesian posterior distribution $\eta(x)$ of the underlying state. This is given to the human, who then makes a local decision $a_n$ according to the steady-state action distribution $\Gamma^{\eta}(a_i)$ induced by the Lindbladian operator $\mathcal{L}_{(\alpha,\lambda,\phi,\eta)}$ and map \eqref{act_map}.
    \item Based on the history of local actions $a_1,\dots,a_n$, the global decision maker chooses action 
    \[u_n = \{1 (\textrm{stop and announce change}), \ 2(\textrm{continue}) \} \]
    \item Define the public belief $\pi_n$ and private belief $\eta_n$ at time $n$ as the posterior distributions initialized with $\eta_0 = \pi_0 = [0,1]':$
    \begin{align}
        \begin{split}
            &\pi_n(x) = \PR(x_n = x | a_1,\dots,a_n), x = 1,2 \\
            &\eta_n(x) = \PR(x_n = x | a_1,\dots,a_{n-1},y_n),
        \end{split}
    \end{align}
    where $y_n$ is the private observation recorded by agent $n$. We have $\pi_n(x), \eta_n(x) \in \Pi$, the unit one-simplex.
    \end{enumerate}

\begin{figure}[h!]
     \centering
     \includegraphics[width=1\linewidth,scale=0.5]{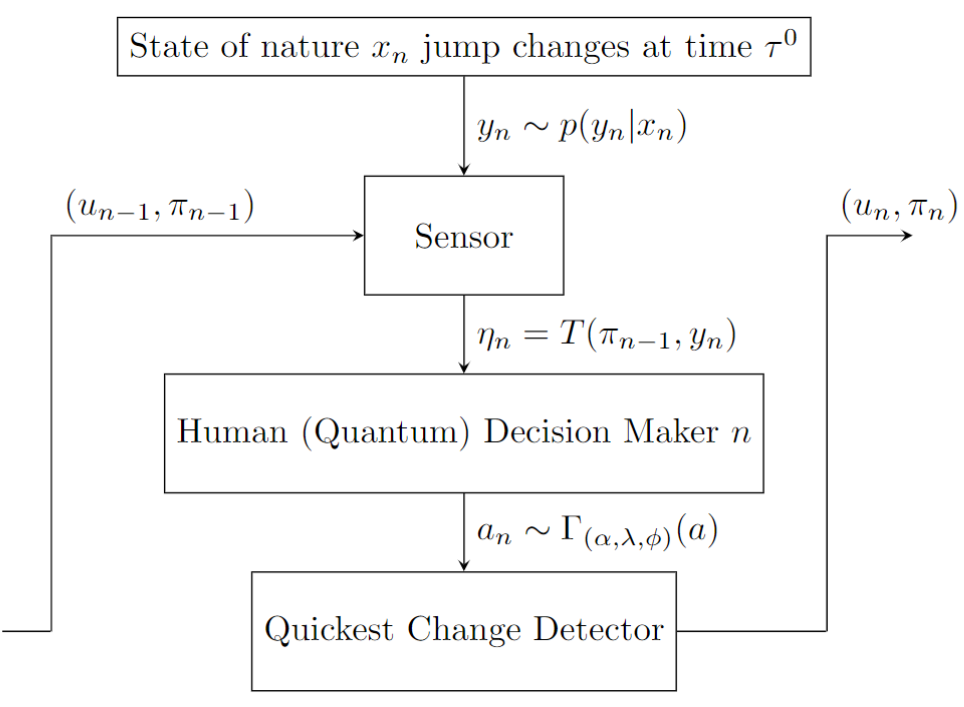}
     \caption{Sequential Quickest Change Detection with quantum agents. The underlying state of nature $x_n$ jump changes at time $\tau^0 \sim$Geo(1-$p$), where $p$ is known. At each time $n$ a sensor observes $y_n \sim P(y|x)$, and the public belief signal $\pi_{n-1}$ from the previous time point. The sensor outputs a private belief $\eta_n$ (obtained via Bayesian update) in the underlying state to the quantum decision maker. The decision maker's parameterized psychological Lindbladian $\CLind(\eta_n)$ evolves to steady state $\Gamma$ \eqref{act_map} and an action $a_n$ is taken probabilistically from $\Gamma$. The quickest detector sees $a_n$ and outputs its public belief $\pi_n$ and signal $u_n$ according to \eqref{glob_dec_mak} and \eqref{glob_act}.}
     \label{fig:QCD_Proc}
 \end{figure}

\subsection{Change Detection Protocol \cite{krishnamurthy2021quickest}}
\label{QCD_prot}
We now detail the multi-agent quickest change detection protocol:
\begin{enumerate}
    \item \textit{Local human decision maker n}
    \begin{enumerate}[label=\roman*)]
        \item Sensor obtains public belief $\pi_{n-1}$ and signal $u_{n-1}$ from global decision maker. The process only continues if $u_{n-1} = 2$.
        
        \item Let $\mathcal{Y}$ denote the observation space. The sensor records noisy observation $y_n \in \mathcal{Y}$ of state of nature $x_n$ with conditional density 
        \begin{equation}
            B_{x,y} = p(y_n = y|x_n = x)
        \end{equation}
        
        \item \textit{Private Belief}. The sensor evaluates the Bayesian private belief 
        \begin{align}
        \label{priv_bel}
        &\eta_n = T(\pi_{n-1},y_n), \ T(\pi,y) = \frac{B_y P' \pi}{\sigma(\pi,y)}, \\
        & \sigma(\pi,y) = \boldsymbol{1}'B_yP'\pi, \ B_y = \textrm{diag}(B_{1,y}B_{2,y})
        \end{align}
        and feeds this to the human agent. 
        
        \item \textit{Local decision}. The agent's private belief $\eta_n$ parameterizes the Lindbladian operator $\CLind$. This induces a steady-state action probability distribution $\Gamma^{\eta}(a_i)$ via the map \eqref{act_map}, and the action $a_n$ is taken probabilistically according to $\Gamma^{\eta}$.
    \end{enumerate}
    \item \textit{Quickest Detector}. Based on the decisions $a_n$ of local decision maker $n$, the quickest detector:
    \begin{enumerate}[label=\roman*)]
        \item Updates the public belief from $\pi_{n-1}$ to $\pi_n$ as 
        \begin{align}
            \begin{split}
            \label{glob_dec_mak}
                &\pi_n = \bar{T}(\pi_{n-1},a_n) \\
                &\bar{T}(\pi,a) = \frac{R_{\pi}(a)P'\pi}{\bar{\sigma}(\pi,a)}, \ \bar{\sigma}(\pi,a) = \boldsymbol{1}'R_{\pi}(a)P'\pi \\
                &R_{\pi}(a) = \textrm{diag}(R_{1,\pi}(a), R_{2,\pi}(a)) \\
                &R_{x,\pi}(a_n) = \PR(a_n = a | x_n = x, \pi_{n-1})
            \end{split}
        \end{align}
        The action probabilities $R_{x,\pi}(a)$ are computed as 
        \begin{equation}
        \label{act_prob}
            R_{x,\pi}(a) = \int_{\mathcal{Y}}\bar{\Gamma}_{y}^{\pi_{n-1}}(a)B_{x,y}dy
        \end{equation}
        where $\bar{\Gamma}_{y}^{\pi_{n-1}}$ is the QDM's induced action distribution \eqref{act_map} given public belief $\pi_{n-1}$ and observation $y$. Specifically, $\bar{\Gamma}_{y}^{\pi_{n-1}}$ is the output of the map \eqref{act_map}, with input $\eta(x) = T(\pi_{n-1},y), u(x,a)$ and \textit{estimated} parametrization $(\hat{\alpha},\hat{\lambda},\hat{\phi})$. Observe that here the quickest detector has an estimate of the psychological parametrization $(\alpha,\lambda,\phi)$; later we will investigate the performance sensitivity to this estimate.
        \item Chooses global action $u_n$ using optimal policy $\mu^*$:
        \begin{equation}
        \label{glob_act}
            u_n = \mu^*(\pi_n) \in \{1 (\textrm{stop}), \ 2 (\textrm{continue})\}.
        \end{equation}
        \item Is $u_n = 2$, then set $n$ to $n+1$ and go to step 1. If $u_n = 1$, then stop and announce change. 
    \end{enumerate}
\end{enumerate}

We assume the global decision maker knows $P$ \eqref{trans_mat} and the agent's action $a_n$, and has an estimate of the agent's psychological dynamics $\mathcal{L}_{(\hat{\alpha},\hat{\lambda},\hat{\phi})}$. The global decision maker does not know the observation $y_n$ or the private belief $\eta_n$. For simplicity, we assume all agents have the same psychological dynamics 
$\CLind$ (Such as if the same agent acts sequentially), otherwise the optimal detection strategy is non-stationary.
The update \eqref{glob_dec_mak} is where the quantum decision theory enters our quickest detection formulation. In simple terms, the action of the human is a probablistic function of the noisy measurement of the sensor. So the likelihood of the action given the state enters our computation for the belief state in quickest detection.

\subsection{Quickest Detector Optimal Policy \cite{krishnamurthy2021quickest}}
\label{QCD_OptPol}
Considering the aim of quickest detection, characterized by \eqref{opt_pol}, we now outline the details of the optimal policy stochastic dynamic programming formulation. 

\begin{enumerate}
    \item \textit{Costs}: To present the dynamic programming equation we first formulate the false alarm and delay costs \eqref{opt_pol} incurred by the global decision maker in terms of the public belief.
    \begin{enumerate}[label=\roman*)]
        \item \textit{False alarm penalty}: If global decision $u_n = 1$ (stop) is chosen before the change point $\tau^0$, then a false alarm penalty is incurred. The false alarm event $\{x_n = 2, u_n = 1\}$ represents the event that a change is announced before the change happens at time $\tau^0$. Recall \eqref{trans_mat} the jump change occurs at time $\tau^0$ from state 2 to state 1. Then recalling $f \geq  0$ is the false alarm penalty in \eqref{opt_pol}, the expected false alarm penalty is 
        \begin{align}
        \begin{split}
            &f\PR_{\mu}(\tau < \tau^0) = f\EX_{\mu}\{\EX [I(x_n = 2, u_n = 1) | \mathcal{G}_n]\} \\
            &\mathcal{G}_n = \sigma\textrm{-algebra generated by }(a_1,\dots,a_n)
        \end{split}
        \end{align}
        Clearly $\EX [I(x_n = 2, u_n = 1) | \mathcal{G}_n]$ can be expressed in terms of the public belief $\pi_n(2) = P(x_n = 2|a_1,\dots,a_n)$ as 
        \begin{equation}
        \label{fa}
            C(\pi_n,u_n=1) = f e_2'\pi_n, \ \ \textrm{where } e_2 = [0,1]'
        \end{equation}
        
        \item \textit{Delay cost of continuing}: If global decision $u_n = 2$ is taken then Protocol 1 continues to the next time. A delay cost is incurred when the event $\{x_n = 1,u_n = 2\}$ occurs, i.e. no change is declared at time $n$. The expected delay cost is $d\EX[I(X_n = 1, u_n = 2) | \mathcal{G}_n]$ where $d > 0$ denotes the delay cost. In terms of the public belief, the delay cost is 
        \begin{equation}
        \label{dc}
            C(\pi_n, u_n = 2) = de_1'\pi_n, \ \textrm{where }e_1 = [1,0]'
        \end{equation}
        
        We can re-express Kolmogorov-Shiryaev criterion \eqref{opt_pol} as 
        \begin{equation}
        \label{E_Cost}
            J_{\mu} = \EX_{\mu}\{\sum_{n=0}^{\tau-1} C(\pi_n,2) + C(\pi_{\tau},1) \}
        \end{equation}
        where $\tau = \inf\{n : u_n = 1\}$ is adapted to the $\sigma\textrm{-algebra} \ \mathcal{G}_n$. Since $C(\pi,1),C(\pi,2)$ are non-negative and bounded for all $\pi \in \Pi$, stopping is guaranteed in finite time.  
    \end{enumerate}
    
    \item \textit{Bellman's equation for Quickest Detection Policy}: Consider the costs \eqref{fa}, \eqref{dc} defined in terms of the public belief $\pi$. Then the optimal stationary policy $\mu^{*}(\pi)$ defined in \eqref{opt_pol} and associated value function $V(\pi)$ are the solution of Bellman's dynamic programming functional equation
    \begin{align}
        \begin{split}
        \label{bellman}
            &Q(\pi,1) := C(\pi,1) \\
            &Q(\pi,2) := C(\pi,2) + \sum_{a \in \mathcal{A}_1 \times \mathcal{A}_2} \mathcal{V}(\bar{T}(\pi,a))\bar{\sigma}(\pi,a) \\
            &\mu^*(\pi) = \textrm{argmin}\{Q(\pi,1),Q(\pi,2)\}, \\ &\mathcal{V}(\pi) = \min\{Q(\pi,1),Q(\pi,2)\} = J_{\mu}^*(\pi)
        \end{split}
    \end{align}
    The public belief update $\bar{T}$ and normalization measure $\bar{\sigma}$ were defined in \eqref{glob_dec_mak}. The goal of the global decision-maker is to solve for the optimal quickest change policy $\mu^*$ in \eqref{bellman} or, equivalently, determine the optimal stopping set $\mathcal{S}$
    \begin{equation}
        \mathcal{S} = \{\pi: \mu^*(\pi) = 1 \} = \{\pi: Q(\pi,1) \leq Q(\pi,2) \}
    \end{equation}
    
    \item \textit{Value Iteration Algorithm}: The optimal policy $\mu^*(\pi)$ and value function $\mathcal{V}(\pi)$ can be constructed as the solution of a fixed point iteration of Bellman's equation \eqref{bellman}. The resulting algorithm is called the value iteration algorithm. The value iteration algorithm proceeds as follows: Initialize $\mathcal{V}_0(\pi) = 0$ and for iterations $k=1,2,\dots$
    \begin{align}
        \begin{split}
        \label{value_itr}
            &\mathcal{V}_{k+1}(\pi) = \min_{u \in \mathcal{U}}Q_{k+1}(\pi,u), \\
            &\mu_{k+1}^*(\pi) = \textrm{argmin}_{u \in \mathcal{U}}Q_{k+1}(\pi,u), \ \pi \in \Pi, \\
            &Q_{k+1}(\pi,1) = C(\pi,1), \\ &Q_{k+1}(\pi,2) = C(\pi,2) + \sum_{a \in \mathcal{A}_1 \times \mathcal{A}_2} \mathcal{V}_k (\bar{T}(\pi,a))\bar{\sigma}(\pi,a)
        \end{split}
    \end{align}
    Let $\mathcal{B}$ denote the set of bounded real-valued functions on $\Pi$. For any $\mathcal{V},\Tilde{\mathcal{V}} \in \mathcal{B}$ and $\pi \in \Pi$, define the sup-norm metric sup$\parallel \mathcal{V}(\pi) - \Tilde{\mathcal{\pi}}\parallel$. Since $C(\pi,1),C(\pi,2),\pi \in \Pi$ are bounded, the value iteration algorithm \eqref{value_itr} generates a sequence of lower semi-continuous value functions $\{\mathcal{V}_k\} \subset \mathcal{B}$ that converges pointwise as $k \to \infty $ to $\mathcal{V}(\pi) \in \mathcal{B}$, the solution of Bellman's equation.
\end{enumerate}

\section{Characterizing the Structure of the Quickest Detector}
\label{sec:results}
In this section we analyze several structural properties of the quickest detection protocol detailed in Sec. \ref{sec:QCD}. Our results in this section are structured as follows: In Section \ref{sing_thres} we prove that the optimal policy \eqref{bellman} has a single threshold structure. In Section \ref{lb_perf} we provide a lower bound on the optimal cost incurred by the quickest detector via the policy of Sec. \ref{sec:QCD}. Specifically, this lower bound is given by the optimal cost incurred within the classical quickest change detection protocol, i.e. without intermediate human decisions. The key idea here is to use Blackwell dominance between matrices characterizing the quickest detector observations and the noisy sensor observations. In Section \ref{estimate} we consider the performance sensitivity to the quickest detector's estimate of the psychological parameterization, and prove an upper bound on the cumulative cost incurred in terms of the cumulative cost incurred given perfect knowledge of the parameterization and a KL Divergence term. 

\subsection{Existence of a Threshold Optimal Policy}
\label{sing_thres}
We will show that, given the quantum decision making quickest change detection protocol detailed in Section \ref{QCD_prot}, the quickest detector's optimal policy \eqref{opt_pol} exhibits a single-threshold behavior.

\begin{theorem}
\label{thm:sing_thres}
Given the quantum decision making quickest change detection protocol detailed in Section \ref{QCD_prot}, the quickest detector's optimal policy $\mu^*$ \eqref{opt_pol} exhibits a single threshold state $\pi'$ such that
\[\mu^*(\pi) = \begin{cases} 2, \ \  \pi < \pi' \\  1, \ \ \pi \geq \pi'\end{cases} \]
\end{theorem}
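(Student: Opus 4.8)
\emph{Proof plan.} The problem \eqref{opt_pol} is a stopping-time POMDP, and I would follow the classical Shiryaev route, adapted to the belief-dependent observation likelihood. Since the state space has two elements, identify every belief $\pi\in\Pi$ with the scalar $\pi(1)\in[0,1]$, so that $\mathcal V$ and $\mu^{*}$ are functions on $[0,1]$; then the one-step costs are affine, $C(\pi,1)=f\,e_{2}'\pi=f\bigl(1-\pi(1)\bigr)$ and $C(\pi,2)=d\,e_{1}'\pi=d\,\pi(1)$, and the stopping set is $\mathcal S=\{\pi:\,h(\pi)\le 0\}$ with $h(\pi):=Q(\pi,1)-Q(\pi,2)=f\bigl(1-\pi(1)\bigr)-d\,\pi(1)-\sum_{a}\bar\sigma(\pi,a)\,\mathcal V\!\bigl(\bar T(\pi,a)\bigr)$. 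The target is to prove that $\mathcal V$ is concave on $[0,1]$; granting in addition that $\pi\mapsto\sum_{a}\bar\sigma(\pi,a)\mathcal V(\bar T(\pi,a))$ is then concave, $h$ is affine minus concave, hence convex, so $\mathcal S$ is an interval. Since $C(\pi,1)=0$ while $Q(\pi,2)\ge C(\pi,2)=d>0$ at $\pi(1)=1$, one gets $h(1)\le-d<0$, so $\pi(1)=1$ lies in $\mathcal S$ and the interval is necessarily of the form $[\pi',1]$, which is exactly the asserted threshold policy (the degenerate cases $\pi'=0$ and $\pi'\in(0,1)$ both being allowed by the statement).

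\emph{Concavity via value iteration.} To obtain concavity of $\mathcal V$ I would induct along the value iteration \eqref{value_itr}, starting from the concave $\mathcal V_{0}\equiv 0$. If $\mathcal V_{k}$ is concave, then $Q_{k+1}(\cdot,1)=C(\cdot,1)$ is affine, $Q_{k+1}(\pi,2)=C(\pi,2)+\sum_{a}\bar\sigma(\pi,a)\mathcal V_{k}(\bar T(\pi,a))$ should be concave by the lemma below, and $\mathcal V_{k+1}=\min\{Q_{k+1}(\cdot,1),Q_{k+1}(\cdot,2)\}$ is concave because a pointwise minimum of concave functions is concave; the pointwise limit then retains concavity. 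For the lemma, write $w_{a}(\pi):=R_{\pi}(a)P'\pi$, so that $\bar\sigma(\pi,a)=\boldsymbol{1}'w_{a}(\pi)$ and $\bar T(\pi,a)=w_{a}(\pi)/\bar\sigma(\pi,a)$; then $\bar\sigma(\pi,a)\mathcal V_{k}(\bar T(\pi,a))=\widetilde{\mathcal V}_{k}\bigl(w_{a}(\pi)\bigr)$, where $\widetilde{\mathcal V}_{k}(w):=(\boldsymbol{1}'w)\,\mathcal V_{k}(w/\boldsymbol{1}'w)$ is the degree-one positively homogeneous extension of $\mathcal V_{k}$ to the nonnegative orthant, which is concave precisely when $\mathcal V_{k}$ is. Hence each summand is concave as soon as each $w_{a}(\cdot)$ is affine in $\pi$ --- which is automatic if the observation likelihoods $R_{x,\pi}(a)$ do not depend on $\pi$.

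\emph{The main obstacle.} The genuine difficulty --- and the reason this model yields a \emph{single} threshold whereas the social-learning quickest detector of \cite{krishnamurthy2021quickest} yields several --- is precisely that $R_{x,\pi}(a)=\int_{\mathcal Y}\bar\Gamma^{\pi}_{y}(a)\,B_{x,y}\,dy$ genuinely depends on $\pi$ through the private belief $\eta=T(\pi,y)$ in \eqref{act_prob}, so $w_{a}(\cdot)$ is not affine and the naive propagation of concavity breaks. What rescues it is the continuity (indeed smoothness) of $\pi\mapsto\bar\Gamma^{\pi}_{y}(a)$: from the closed form $\vec\rho(t)=\exp(\CLind t)\,\vec\rho(0)$, every entry of $\CLind$ depends continuously on the coefficients $\gamma_{(m,n)}$ and hence, via \eqref{gamma_mn}--\eqref{B_Matrix}, on $\eta$, while $\eta=T(\pi,y)$ is continuous in $\pi$; so $R_{x,\pi}(a)$, $\bar\sigma(\pi,a)$ and $\bar T(\pi,a)$ are continuous in $\pi$, whereas the analogous social-learning quantities jump because of herding and destroy concavity. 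The bulk of the work is therefore a careful argument that, with $\bar\Gamma^{\pi}_{y}$ continuous and the martingale identity $\sum_{a}\bar\sigma(\pi,a)\bar T(\pi,a)=P'\pi$ (valid because $\sum_{a}R_{x,\pi}(a)=1$) available, concavity of $\mathcal V$ --- or, as an equivalent alternative, monotonicity of $\mathcal V$ in $\pi(1)$ --- is preserved by the Bellman operator \eqref{bellman}. Once that regularity step is secured, the single-threshold conclusion follows from the first paragraph.
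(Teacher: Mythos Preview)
Your approach is essentially identical to the paper's: induct along value iteration \eqref{value_itr}, use positive homogeneity to rewrite $\bar\sigma(\pi,a)\,\mathcal V_k(\bar T(\pi,a))$ as $\widetilde{\mathcal V}_k\bigl(R_{\pi}(a)P'\pi\bigr)$, and invoke continuity of $\bar\Gamma_y^{\pi}$ in $(\pi,y)$, which the paper establishes from the closed form $\vec\rho(t)=\exp(\CLind t)\,\vec\rho(0)$ exactly as you sketch. From continuity of $R_{\pi}(a)$ in $\pi$ and linearity of the one-step costs, the paper then asserts that $\mathcal V_{k+1}$ is concave and concludes the single-threshold structure.

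You are right to isolate the passage from continuity to concavity as the crux: $w_a(\pi)=R_{\pi}(a)P'\pi$ is genuinely nonlinear in $\pi$, and composing a concave $\widetilde{\mathcal V}_k$ with a merely continuous map need not preserve concavity. The paper's proof supplies no argument here beyond continuity; it simply writes ``Thus $\mathcal V_{k+1}(\pi)$ is concave.'' So your plan matches the paper step for step, and the gap you honestly flag in your own outline is present, unacknowledged, in the paper's proof as well. If you want to actually close it you will need something stronger than continuity --- e.g.\ a structural property of $\pi\mapsto R_{\pi}(a)$ specific to the Lindbladian steady state, or a direct verification that $\sum_a \widetilde{\mathcal V}_k\bigl(w_a(\pi)\bigr)$ is concave despite the nonlinearity; the martingale identity $\sum_a w_a(\pi)=P'\pi$ you note is a helpful constraint but is not by itself sufficient. (As a side remark, monotonicity of $\mathcal V$ in $\pi(1)$ is not an equivalent alternative to concavity for obtaining a convex stopping set, so that escape route would need its own justification.)
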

\begin{proof}
See Appendix \ref{pf:sing_thres}.
\end{proof}

In Section~\ref{pris_dil} we numerically implement the value iteration algorithm~\eqref{value_itr} in the context of a 'Prisoner's Dilemma' quickest detection scheme. In particular,  Fig.~\ref{fig:PD_OptPol} demonstrates the single threshold behavior of the optimal policy $\mu^*(\pi)$ \eqref{bellman}. 

This optimal policy structure is in contrast to the multi-threshold policy obtained in \cite{krishnamurthy2021quickest}, in which an \textit{anticipatory} model was used for the human decision makers. Within a multi-threshold (non-convex stopping region) policy, there exist points where the optimal behavior is to transition from declaring change to declaring no change as the \textit{probability of change increases}. This is not only counterintuitive, but makes the design of human-sensor quickest detectors more complex. Thus the single threshold policy exhibited in our case is desirable for intuitive and practical design purposes.

\subsection{Lower bound for performance} 
\label{lb_perf} 
We now show that the optimal cost incurred by quickest change detection with quantum agents is greater than that incurred by the classical Bayesian framework. We note that this result is not due to the 'quantum' behavior, but holds because of the general local-global decision maker setup. Nevertheless, this is useful since performance analysis of standard quickest detection \cite{tartakovsky2005general} applies as a lower bound for quickest detection with quantum agents. Consider the optimal policy and cost of the \textit{classical} Bayesian quickest change detection.  \cite{tartakovsky2005general}. Similar to \eqref{value_itr}, the optimal policy $\underline{\mu}^*(\pi)$ and cost $\underline{\mathcal{V}}(\pi)$ incurred by the classical quickest detection, satisfy the stochastic dynamic programming equation: 
\begin{align}
    \begin{split}
    \label{clas_val_it}
        &\underline{\mu}^*(\pi) = \textrm{argmin}_{u \in \mathcal{U}}\underline{Q}(\pi,u), \ \ \  \underline{\mathcal{V}}(\pi) = \min_{u \in \mathcal{U}}\underline{Q}(\pi,u), \ \\
        &\textrm{where }\underline{Q}(\pi,2) = C(\pi,2) + \sum_{y \in \mathcal{Y}}\underline{\mathcal{V}}(T(\pi,y))\sigma(\pi,y), \\
        &\underline{Q}(\pi,1) = C(\pi,1), \ \ \ \underline{J}_{\mu^*}(\pi) = \underline{\mathcal{V}}(\pi)
    \end{split}
\end{align}
Here $T(\pi,y)$ is the Bayesian filter update defined in \eqref{priv_bel} and $\underline{J}_{\mu^*}(\pi)$ is the cumulative cost of the optimal policy starting with initial belief $\pi$. Note that in classical quickest detection, there is no public belief update \eqref{glob_dec_mak} or interaction between public and private beliefs. 

\begin{theorem}
\label{thm:lower_bound}
Consider the quantum decision making quickest change detection protocol in Section \ref{QCD_prot} and the associated value function $\mathcal{V}(\pi)$ in \eqref{value_itr}. Consider also the classical quickest change detection problem with value function $\underline{\mathcal{V}}(\pi)$ in \eqref{clas_val_it}. Then for any initial belief $\pi \in \Pi$, the optimal cost incurred by the classical quickest detection is smaller than that of quickest detection with quantum decision agents. That is, $\underline{\mathcal{V}}(\pi) \leq \mathcal{V}(\pi) \ \forall \pi \in \Pi$.
\end{theorem}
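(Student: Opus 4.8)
The plan is to regard the quickest detector's observation $a_n$ as a Blackwell garbling of the sensor's raw observation $y_n$ and to propagate this garbling through the value iteration. First I would record the conditional-independence structure implicit in the protocol of Section~\ref{QCD_prot}: for a fixed public belief $\pi$, the triple $(x_n,y_n,a_n)$ forms a Markov chain $x_n \to y_n \to a_n$, since by \eqref{act_prob} the action likelihood factors as $R_{x,\pi}(a) = \int_{\mathcal{Y}} \bar\Gamma_y^\pi(a)\, B_{x,y}\, dy$ with a garbling kernel $\bar\Gamma_y^\pi(a)$ that does not depend on $x$. From this I would derive the key identity that the public-belief update $\bar T(\pi,a)$ of \eqref{glob_dec_mak} is a convex combination of the private-belief updates $T(\pi,y)$ of \eqref{priv_bel}, with nonnegative weights $\sigma(\pi,y)\bar\Gamma_y^\pi(a)/\bar\sigma(\pi,a)$ summing to one, and that both observation structures are mean preserving: $\sum_y T(\pi,y)\sigma(\pi,y) = \sum_a \bar T(\pi,a)\bar\sigma(\pi,a) = P'\pi$, using $\sum_y B_{x,y}=1$ and $\sum_a \bar\Gamma_y^\pi(a)=1$.

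Next I would run an induction on the value iterates. Let $\mathcal{V}_k$ and $\underline{\mathcal{V}}_k$ be the $k$-th iterates of \eqref{value_itr} and \eqref{clas_val_it}, both started at zero; the claim is $\underline{\mathcal{V}}_k \le \mathcal{V}_k$ pointwise on $\Pi$ for every $k$, after which the theorem follows from the pointwise convergence of the value iteration for both problems (Section~\ref{QCD_OptPol}). The base case is immediate, and the stopping terms coincide, $Q_{k+1}(\pi,1) = C(\pi,1) = \underline Q_{k+1}(\pi,1)$, so all the work is in the continuation term. There I would invoke concavity of $\mathcal{V}_k$ on $\Pi$ and apply Jensen's inequality to the convex-combination identity above to get $\sum_a \mathcal{V}_k(\bar T(\pi,a))\bar\sigma(\pi,a) \ge \sum_y \mathcal{V}_k(T(\pi,y))\sigma(\pi,y)$ (the sum over $a$ collapses because $\sum_a \bar\Gamma_y^\pi(a)=1$), and then apply the induction hypothesis $\mathcal{V}_k \ge \underline{\mathcal{V}}_k$ to reach $Q_{k+1}(\pi,2) \ge \underline Q_{k+1}(\pi,2)$. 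Taking the minimum over $u\in\{1,2\}$ then gives $\mathcal{V}_{k+1}(\pi) \ge \underline{\mathcal{V}}_{k+1}(\pi)$, completing the induction.

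The one nontrivial ingredient is the concavity of the value iterates $\mathcal{V}_k$ on the simplex $\Pi$; this is precisely the property that drives the single-threshold structure, so I would borrow it from the proof of Theorem~\ref{thm:sing_thres} rather than re-derive it. I expect the main obstacle to be bookkeeping rather than conceptual: one must check that the $\pi$-dependence of the garbling kernel $\bar\Gamma_y^\pi$ does not spoil either the Markov structure $x_n \to y_n \to a_n$ at fixed $\pi$ --- it does not, because that dependence enters only through the private belief $T(\pi,y)$ and never couples $a_n$ to $x_n$ given $y_n$ --- or the concavity-preservation step imported from Theorem~\ref{thm:sing_thres}. With those checks in hand, the statement reduces to the familiar fact that garbling an observation channel cannot lower the optimal cost of a partially observed optimal-stopping problem, here comparing the quantum-agent quickest detector to the classical one.
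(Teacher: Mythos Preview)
Your proposal is correct and follows essentially the same Blackwell-dominance argument as the paper: express $\bar T(\pi,a)$ as a convex combination of the $T(\pi,y)$'s, apply Jensen's inequality via concavity of the value iterates, and induct through the value iteration. The one minor difference is that you invoke concavity of the quantum-agent iterates $\mathcal{V}_k$ (borrowed from Theorem~\ref{thm:sing_thres}) and then apply the induction hypothesis, whereas the paper first applies the induction hypothesis and then uses concavity of the \emph{classical} iterates $\underline{\mathcal{V}}_k$, which is standard and does not rely on Theorem~\ref{thm:sing_thres}; both orderings work and the proofs are otherwise identical.
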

\begin{proof}
See Appendix \ref{pf:lower_bound}.
\end{proof}

Informally, this result can be interpreted by the observation that the intermediate human decision making process results in loss of information pertaining to the underlying state. Indeed, we use Blackwell Dominance arguments within the proof, which formalize this notion of cascaded information loss. The practical interpretation is that regardless of the human psychological parametrization, i.e. perfectly rational etc., the hierarchical detection structure in which there is an intermediate human decision making process results in decreased detection performance (by way of \textit{increased cost} through the value function $\mathcal{V}(\pi)$).

\subsection{Sensitivity of Detection Performance to Psychological Parameters}
\label{estimate}
Recall that the quickest detector uses an estimate of the psychological parameters $(\alpha,\lambda,\phi)$. Thus, we would like to characterize how the quickest detection performance depends on such an estimate. In this section, we quantify this question and provide a bound on the deviation of the performance from that incurred by perfect knowledge of the psychological parameters. 

First we begin by defining some notation. Recall the domain of the psychological parameters (for brevity we denote this $\pdom$ \[(\alpha,\lambda,\phi) \in \pdom = [0,1]\times[0,\infty)\times[0,1] \subset \reals^3\]
Thus define $\Lambda$ to be a probability density function in $\pdom$ 
\[\Lambda: \pdom \to [0,1],\ \int_{\pdom} \Lambda(\gamma) d\gamma = 1 \]
representing the quickest detector's probabilistic estimate of the local decision maker's psychological parameters. Denote the actual psychological parameterization of the local decision maker by $\bar{\gamma} \in \pdom$. 

We now reconsider the decision making protocol from Section \ref{QCD_prot} when the quickest detector only has this probabilistic estimate of the local decision maker's psychological parameters. The \textit{Local quantum decision maker n} step remains the same, except let us now denote the steady-state action distribution by $\Gamma_{\bar{\gamma}}(a_i)$ to denote that this is a result of the true parameterization $\bar{\gamma}$. The global decision maker now updates the public belief from $\pi_{n-1}$ to $\pi_n$ as
\begin{align}
    \begin{split}
    \label{glob_dec_mak2}
        &\pi_n = \hat{T}(\pi_{n-1},a_n) \\
        &\hat{T}(\pi,a) = \frac{\hat{R}_{\pi}(a)P'\pi}{\bar{\sigma}(\pi,a)}, \ \bar{\sigma}(\pi,a) = \boldsymbol{1}'\hat{R}_{\pi}(a)P'\pi \\
        &\hat{R}_{\pi}(a) = \textrm{diag}(\hat{R}_{1,\pi}(a), \hat{R}_{2,\pi}(a)) \\
        &\hat{R}_{x,\pi}(a_n) = \PR(a_n = a | x_n = x, \pi_{n-1}, \Lambda)
    \end{split}
\end{align}
The action probabilities $\hat{R}_{x,\pi}(a)$ are now computed as 
\begin{equation}
\label{act_prob2}
    \hat{R}_{x,\pi}(a) = \int_{\mathcal{Y}}\int_{\pdom}\bar{\Gamma}_{y,\gamma}^{\pi_{n-1}}(a)\Lambda(\gamma)B_{x,y}d\gamma dy
\end{equation}
where $\bar{\Gamma}_{y,\gamma}^{\pi_{n-1}}$ is the QDM's induced action distribution \eqref{act_map} given public belief $\pi_{n-1}$, observation $y$, and psychological parameterization $\gamma \in \pdom$. The quickest detector then chooses action $u_n$ according to \eqref{glob_act}, where now the optimal policy is computed using the value iteration algorithm \eqref{value_itr} with this new function $\hat{T}(\pi,a)$.

We are now interested in characterizing how this generalized procedure effects the quickest change performance. We now define some notation which will allow us to reason about this.
Notice that quickest change decision making protocol in Section \ref{QCD_prot} is completely characterized as a two-state partially Observed Markov Decision Process (POMDP) with underlying state transition matrix $P$ and observation likelihood $R_{\pi}(a)$. Similarly, the generalized protocol presented immediately above is characterized as a POMDP with transition matrix $P$ and observation likelihood $\hat{R}_{\pi}(a)$. Notice that in our case the observation likelihoods are functions of $\pi$. We can then denote these POMDPs as $\theta = (P,R_{\pi})$ and $\hat{\theta} = (P,\hat{R}_{\pi})$, and their resultant optimal policies $\mu^*(\theta)$ and $\mu^*(\hat{\theta})$, respectively. Let $J_{\mu^*{\theta}}(\pi;\theta)$ and $J_{\mu^*(\theta)}(\pi;\hat{\theta})$ denote the discounted cumulative costs  incurred by these POMDPs when using policy $\mu^*(\theta)$. Similarly, $J_{\mu^*(\hat{\theta})}(\pi;\theta)$ and $J_{\mu^*(\hat{\theta})}(\pi;\hat{\theta})$ denote the discounted cumulative costs incurred by these POMDPs when using policy $\mu^*(\hat{\theta})$. These POMDP formulations have cost $C(e_i,u) = f\mathbb{I}_{\{e_i=2,u=1\}} + d\mathbb{I}_{\{e_1=1,u=2\}}$ and an implicit discount factor $\dcost = 1-p$ (see \cite{johnston2006opportunistic} for details). 

Now we can formulate a bound on the cumulative cost incurred when the quickest detector only has the estimate $\Lambda$ of psychological parameters:

\begin{theorem}
\label{thm:estimate}
Consider the quickest change detection protocols in which the quickest detector uses $\bar{T}(\pi,a)$ and $\hat{T}(\pi,a)$ for its public belief update. Denoting the corresponding POMDP characterizations by $\theta = (P,R_{\pi})$ and $\hat{\theta} = (P,\hat{R}_{\pi})$, respectively, and using the notation defined above, we have the inequality
\begin{align}
\label{ineq_result}
\begin{split}
    &J_{\mu^*(\hat{\theta})}(\pi,\theta) \leq J_{\mu^*(\theta)}(\pi,\theta) + 2K\|\theta - \hat{\theta}\| \\
    &K = \frac{1}{p}\max_{i,u}C(e_i,u)\\ 
    &\|\theta - \hat{\theta}\| = \sqrt{2}\sup_{\pi}{\max_i \sum_j P_{ij} [D(R_{j,\pi} \| \hat{R}_{j,\pi})]^{1/2}}
\end{split}
\end{align} 
where $D(R_{j,\pi} \| \bar{R}_{j,\pi}) = \sum_a R_{j,\pi}(a) \log(R_{j,\pi}(a) / \bar{R}_{j,\pi}(a))$ denotes the Kullback-Leibler divergence.
\end{theorem}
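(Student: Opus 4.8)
The plan is to bound the performance loss from using the ``wrong'' POMDP $\hat\theta$ via a two-step argument: first control how close the optimal value functions $J_{\mu^*(\theta)}$ and $J_{\mu^*(\hat\theta)}$ of the two POMDPs are (in sup-norm over $\pi$), and then convert a value-function bound into the stated policy-suboptimality bound. The engine for the first step is a Lipschitz-type sensitivity result for discounted POMDPs: since the two POMDPs share the same transition matrix $P$, the same cost $C$, and the same discount $\dcost=1-p$, and differ only in the observation likelihoods $R_\pi$ versus $\hat R_\pi$, the Bellman operators $\mathcal{T}_\theta$ and $\mathcal{T}_{\hat\theta}$ are both $\dcost$-contractions on $(\mathcal{B},\|\cdot\|_\infty)$ with a common fixed-point structure, and $\|\mathcal{T}_\theta V - \mathcal{T}_{\hat\theta} V\|_\infty$ can be bounded uniformly by a term measuring the discrepancy between the observation kernels. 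A standard fixed-point perturbation lemma then gives $\|J_{\mu^*(\theta)}(\cdot;\theta) - J_{\mu^*(\hat\theta)}(\cdot;\hat\theta)\|_\infty \le \frac{1}{1-\dcost}\sup_V\|\mathcal{T}_\theta V - \mathcal{T}_{\hat\theta}V\|_\infty = \frac{1}{p}(\cdots)$, which is where the constant $K=\frac1p\max_{i,u}C(e_i,u)$ comes from.

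Next I would make precise the discrepancy term. The difference between the two Bellman updates at belief $\pi$ comes entirely from replacing $\bar\sigma(\pi,a)$-weighted sums over $\bar T(\pi,a)$ with the $\hat{}$-analogues, and the relevant quantity is the total-variation distance between the predicted next-observation distributions under state $e_i$, i.e. between the rows $\sum_j P_{ij} R_{j,\pi}(\cdot)$ and $\sum_j P_{ij}\hat R_{j,\pi}(\cdot)$. By the triangle inequality for total variation followed by Pinsker's inequality $\|R_{j,\pi} - \hat R_{j,\pi}\|_{\mathrm{TV}} \le \sqrt{\tfrac12 D(R_{j,\pi}\|\hat R_{j,\pi})}$, and absorbing constants, this is dominated by $\sup_\pi \max_i \sum_j P_{ij}[D(R_{j,\pi}\|\hat R_{j,\pi})]^{1/2}$ — precisely $\|\theta-\hat\theta\|/\sqrt2$ as defined in the statement. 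Multiplying through by the bounded immediate costs $\max_{i,u}C(e_i,u)$ and the factor $\frac1{1-\dcost}=\frac1p$ reconstitutes the factor-$K$ term; the extra factor of $2$ in $2K\|\theta-\hat\theta\|$ is the usual cost of the policy-vs-value conversion in the second step.

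For the second step I would invoke the classical simulation/policy-mismatch lemma for discounted MDPs: for any policy $\mu$ and POMDPs $\theta,\hat\theta$, $J_\mu(\pi;\theta) \le J_\mu(\pi;\hat\theta) + \frac{1}{1-\dcost}\cdot(\text{per-step cost discrepancy})$, and combining the chain $J_{\mu^*(\hat\theta)}(\pi;\theta) \le J_{\mu^*(\hat\theta)}(\pi;\hat\theta) + (\text{disc}) \le J_{\mu^*(\theta)}(\pi;\hat\theta) + (\text{disc}) \le J_{\mu^*(\theta)}(\pi;\theta) + 2(\text{disc})$, where the middle inequality uses optimality of $\mu^*(\hat\theta)$ for $\hat\theta$. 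Identifying $(\text{disc})$ with $K\|\theta-\hat\theta\|$ yields \eqref{ineq_result}.

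The main obstacle I anticipate is handling the $\pi$-dependence of the observation likelihoods $R_\pi(a)$ cleanly: unlike a textbook POMDP, here the observation kernel is itself a function of the belief, so the Bellman operator is not of the standard form and one must verify that the contraction and perturbation estimates still go through with the supremum over $\pi$ taken \emph{inside} the discrepancy term (hence the $\sup_\pi$ in $\|\theta-\hat\theta\|$). A secondary technical point is justifying that the per-step discrepancy really is controlled by the total-variation distance between the $P$-mixed observation rows rather than between $R_{j,\pi}$ and $\hat R_{j,\pi}$ separately — this needs the structure of $\bar T$ and $\bar\sigma$ in \eqref{glob_dec_mak}, and a short computation showing $\sum_a |\mathcal{V}(\bar T(\pi,a))\bar\sigma(\pi,a) - \mathcal{V}(\hat T(\pi,a))\hat\sigma(\pi,a)| \le \|\mathcal{V}\|_\infty \cdot 2\|(\text{mixed rows})\|_{\mathrm{TV}}$, using that $\bar T(\pi,a)$ is a normalized belief and $\bar\sigma(\pi,\cdot)$ is the corresponding observation law.
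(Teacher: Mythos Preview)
Your proposal is correct and matches the paper's proof in all essential ingredients: the discount factor $\dcost=1-p$ giving $K=\frac{1}{p}\max_{i,u}C(e_i,u)$, Pinsker's inequality to pass from the $\ell_1$-difference of the $P$-mixed observation rows to the KL term, the $\sup_\pi$ to handle the belief-dependent observation kernel, and the three-line chain $J_{\mu^*(\hat\theta)}(\pi;\theta)\le J_{\mu^*(\hat\theta)}(\pi;\hat\theta)+(\cdot)\le J_{\mu^*(\theta)}(\pi;\hat\theta)+(\cdot)\le J_{\mu^*(\theta)}(\pi;\theta)+2(\cdot)$ using optimality of $\mu^*(\hat\theta)$ in the middle.

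One small remark: your ``first step'' (a Bellman fixed-point perturbation bounding the \emph{optimal} value functions of $\theta$ versus $\hat\theta$) is actually not needed and the paper does not use it. What the chain requires---and what the paper proves directly by unraveling the finite-horizon recursion $J_\mu^{(n)}$ for a \emph{fixed} policy $\mu$---is precisely your ``second step'' simulation lemma $\sup_\pi|J_\mu(\pi;\theta)-J_\mu(\pi;\hat\theta)|\le K\|\theta-\hat\theta\|$ uniformly in $\mu$; the factor of~2 then comes from applying it once for $\mu=\mu^*(\hat\theta)$ and once for $\mu=\mu^*(\theta)$.
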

\begin{proof}
The proof slightly adapts that of Thorem 14.9.1 of \cite{krishnamurthy2016partially} and can be found in Appendix \ref{ap:pf1}.
\end{proof}

Observe that by Corollary \ref{cor1}, the KL Divergence term $D(R_{j,\pi} \| \bar{R}_{j,\pi})$ is continuous with respect to parameters $(\alpha,\lambda,\phi)$. Thus, the detection performance (given by cumulative cost $J_{\mu^*(\hat{\theta})}(\pi,\theta)$) of a quickest detector exploiting an \textit{estimate} of the human psychological parameters $(\alpha,\lambda,\phi)$ is bounded above by a continuous function of the inaccuracy (quantified by an appropriate norm in parameter space) of the parameter estimate. Informally, a change of $\epsilon$ in the parameter estimate will result in change in detection performance of $O(\epsilon)$. In this sense, the detection performance is robust to inaccuracy of the estimated human psychological parametrization.

\subsection{Blackwell Dominance Properties}

Here we present two theorems which will be used with our numerical study to reveal the existence of disjoint convex regions of the psychological parameter space which induce detection performance ordering. Theorem \ref{thm:val_great} states that if one steady-state Lindbladian distribution Blackwell dominates another, then the value function induced by the former is upper bounded by that of the latter. This allows us to reason about the detection performance (characterized by the value function) by investigating the property of Blackwell dominance between steady-state distributions. Theorem  \ref{thm:convex_dom} allows us to interpolate this performance ordering for all convex combinations of steady-state distributions which have this Blackwell dominance property. For ease of explanation, we say matrix $M_1$ is \textit{Blackwell dominating} with respect to matrix $M_2$ (and that $M_2$ is \textit{Blackwell dominated}) if there exists column stochastic matrix $B$ such that $M_2 = M_1B$.

\begin{theorem}
\label{thm:val_great}
Let $\Gamma^{\pi}$ and $\hat{\Gamma}^{\pi}$ be two steady-state action distributions, resulting from map \eqref{act_map} with prior $\pi$ and different Lindbladian parameterizations. Suppose there exists a stochastic (columns sum to 1) matrix $M^{\pi}$ such that $\Gamma^{\pi} = \hat{\Gamma}^{\pi}M^{\pi}$. Then, incorporating these distributions in the update \eqref{act_prob}, the value iteration algorithm \eqref{value_itr} yields $\mathcal{V}(\pi) \geq \hat{\mathcal{V}}(\pi)$, where $\mathcal{V}$ and $\hat{\mathcal{V}}$ are the value functions resulting from the use of distributions $\Gamma^{\pi}$ and $\hat{\Gamma}^{\pi}$, respectively.
\end{theorem}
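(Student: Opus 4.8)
The plan is to run a Blackwell-dominance (garbling) argument through the value iteration \eqref{value_itr}. The two ingredients are: (a) every iterate $\mathcal{V}_k$ and $\hat{\mathcal{V}}_k$ is concave on the simplex $\Pi$; and (b) at any fixed belief $\pi$, replacing the observation kernel built from $\hat{\Gamma}^{\pi}$ by the garbled kernel built from $\Gamma^{\pi}$ can only \emph{increase} the one‑step lookahead of a concave function. Granting (a) and (b), a straightforward induction on $k$ gives $\mathcal{V}_k \ge \hat{\mathcal{V}}_k$ pointwise, and passing to the limit (value iteration converges pointwise, Section~\ref{QCD_OptPol}) yields $\mathcal{V} \ge \hat{\mathcal{V}}$ on $\Pi$.

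First I would transfer the hypothesis to the induced observation likelihoods. Since \eqref{act_prob} is linear in $\bar{\Gamma}^{\pi}_y$, the relation $\Gamma^{\pi} = \hat{\Gamma}^{\pi} M^{\pi}$ passes through the integral, so the kernel $R_\pi$ that builds $\bar T(\pi,\cdot),\bar\sigma(\pi,\cdot)$ in \eqref{glob_dec_mak} is the Blackwell garbling, via the stochastic matrix $M^{\pi}$, of the kernel $\hat R_\pi$ that builds the corresponding $\hat{\bar T}(\pi,\cdot),\hat{\bar\sigma}(\pi,\cdot)$ used for $\hat{\mathcal{V}}$. A short computation (exactly as in the proof of Theorem~\ref{thm:lower_bound}) then gives, at any fixed $\pi$, $\bar\sigma(\pi,a)=\sum_{a'}[M^{\pi}]_{a',a}\,\hat{\bar\sigma}(\pi,a')$ and $\bar T(\pi,a)=\sum_{a'} w^{\pi,a}_{a'}\,\hat{\bar T}(\pi,a')$ with nonnegative weights $w^{\pi,a}_{a'}=[M^{\pi}]_{a',a}\hat{\bar\sigma}(\pi,a')/\bar\sigma(\pi,a)$ summing to one; hence for any concave $\phi:\Pi\to\reals$, Jensen's inequality gives
\[
\sum_a \phi\big(\bar T(\pi,a)\big)\bar\sigma(\pi,a) \;\ge\; \sum_a \phi\big(\hat{\bar T}(\pi,a)\big)\hat{\bar\sigma}(\pi,a) ,
\]
which is ingredient (b). Ingredient (a) is supplied by the proof of Theorem~\ref{thm:sing_thres}: $C(\cdot,1),C(\cdot,2)$ are linear, the map $\pi\mapsto\sum_a \mathcal{V}_k(\bar T(\pi,a))\bar\sigma(\pi,a)$ preserves concavity (it is a sum of perspective functions of a concave function), and a pointwise minimum of concave functions is concave, so induction on $k$ gives concavity of every $\mathcal{V}_k$, and the identical argument applies to every $\hat{\mathcal{V}}_k$.

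Now the main induction. Base case: $\mathcal{V}_0=\hat{\mathcal{V}}_0\equiv 0$. Inductive step: assume $\mathcal{V}_k\ge\hat{\mathcal{V}}_k$ on $\Pi$. Then $Q_{k+1}(\pi,1)=C(\pi,1)=\hat Q_{k+1}(\pi,1)$, while
\begin{align*}
Q_{k+1}(\pi,2) &= C(\pi,2)+\sum_a \mathcal{V}_k\big(\bar T(\pi,a)\big)\bar\sigma(\pi,a) \\
&\ge C(\pi,2)+\sum_a \hat{\mathcal{V}}_k\big(\bar T(\pi,a)\big)\bar\sigma(\pi,a) \\
&\ge C(\pi,2)+\sum_a \hat{\mathcal{V}}_k\big(\hat{\bar T}(\pi,a)\big)\hat{\bar\sigma}(\pi,a) = \hat Q_{k+1}(\pi,2),
\end{align*}
the first inequality by the induction hypothesis and the second by (b) applied with the concave function $\phi=\hat{\mathcal{V}}_k$. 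Taking $\min$ over $u\in\{1,2\}$ gives $\mathcal{V}_{k+1}(\pi)\ge\hat{\mathcal{V}}_{k+1}(\pi)$, and letting $k\to\infty$ completes the argument.

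The step I expect to demand the most care is ingredient (b): checking that the garbling relation is inherited by the belief‑space kernels in the orientation consistent with the stochasticity convention on $M^{\pi}$, so that the weights $w^{\pi,a}_{a'}$ genuinely sum to one, and that the concavity of $\hat{\mathcal{V}}_k$ needed for Jensen is indeed available here (it is, by the machinery of Theorem~\ref{thm:sing_thres}, but it should be stated, since belief‑dependent observation kernels can in general destroy concavity). A secondary point worth flagging is that $M^{\pi}$ may depend on $\pi$: this is harmless because (b) is only ever invoked at the current belief inside the Bellman operator and is never composed across time steps, whereas the induction hypothesis $\mathcal{V}_k\ge\hat{\mathcal{V}}_k$, which is needed at all beliefs, is maintained separately and does not interact with the garbling.
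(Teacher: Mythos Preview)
Your proposal is correct and follows essentially the same approach as the paper: transfer the garbling $\Gamma^{\pi}=\hat{\Gamma}^{\pi}M^{\pi}$ to the induced observation kernels, write $\bar T(\pi,a)$ as a convex combination of the $\hat{\bar T}(\pi,a')$, apply Jensen using concavity of $\hat{\mathcal{V}}_k$, and induct on \eqref{value_itr}. Your additional remarks on the $\pi$-dependence of $M^{\pi}$ and the need to invoke concavity of $\hat{\mathcal{V}}_k$ are useful clarifications that the paper leaves implicit.
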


\begin{proof}
See Appendix \ref{ap:val}.
\end{proof}

Somewhat more informally, this Theorem states that if steady-state distribution $\hat{\Gamma}^{\pi}$ \textit{Blackwell dominates} another steady state distribution $\Gamma^{\pi}$ for all $\pi \in \Pi$, then the quickest detector's performance (cumulative cost incurred) corresponding to the former distribution is better than that corresponding to the latter distribution.

\begin{theorem}
\label{thm:convex_dom}
Suppose there exist probability mass vectors $\hat{\Gamma},\Gamma_1,\Gamma_2 \in \reals^N (N \in \mathbb{N})$ and stochastic matrices $M_1, M_2 \in \reals^{N\times N}$ such that 
$\Gamma_1 = \hat{\Gamma}M_1$ and $\Gamma_2 = \hat{\Gamma}M_2$. Form $\Gamma_3 \in \reals^N$ as $\Gamma_3(a) = \gamma_a\Gamma_1(a) + (1-\gamma_a)\Gamma_2(a) ,\ \gamma_a \in [0,1], \  \forall a \in \{1,\dots,N\}$. Then there exists a stochastic matrix $M_3$ such that $\Gamma_3 = \hat{\Gamma}M_3$.
\end{theorem}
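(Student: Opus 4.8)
The plan is to prove the statement by exhibiting $M_3$ explicitly: take its $a$-th column to be the $\gamma_a$-weighted average of the $a$-th columns of $M_1$ and $M_2$. Writing column-stochasticity of $M_i$ as $(M_i)_{k,a}\ge 0$ and $\sum_{k=1}^N (M_i)_{k,a}=1$ for every column index $a$, and writing the hypotheses coordinatewise as $\Gamma_i(a)=\sum_{k=1}^N \hat\Gamma(k)(M_i)_{k,a}$ for $i=1,2$, I would define
\[
(M_3)_{k,a} \;:=\; \gamma_a\,(M_1)_{k,a} + (1-\gamma_a)\,(M_2)_{k,a}, \qquad k,a \in \{1,\dots,N\}.
\]

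Two routine checks then complete the argument. First, since the product $\hat\Gamma M$ depends linearly on each column of $M$ separately, $(\hat\Gamma M_3)(a) = \gamma_a\,\Gamma_1(a) + (1-\gamma_a)\,\Gamma_2(a) = \Gamma_3(a)$ for every $a$, so $\Gamma_3 = \hat\Gamma M_3$. Second, $M_3$ has nonnegative entries, being convex combinations of nonnegative numbers, and each of its columns sums to $\gamma_a\cdot 1 + (1-\gamma_a)\cdot 1 = 1$; hence $M_3$ is a stochastic matrix. That is everything required.

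I do not expect any real obstacle here; the whole content is choosing the right construction, and the one point worth stating carefully is why the \emph{per-coordinate} weights $\gamma_a$ (rather than a single scalar) cause no difficulty. Mixing the output distributions $\Gamma_1,\Gamma_2$ action by action corresponds exactly to mixing the columns of the garbling matrices, and the cone of column-stochastic matrices is closed under column-wise convex combinations with arbitrary, possibly distinct, weights, because each column is normalized on its own and no global renormalization is ever needed. Had the weights instead been attached to the rows (inputs) of $M_1,M_2$, column sums would in general not be preserved and this direct construction would break. I would also note that the statement does not claim $\Gamma_3$ is itself a probability vector, and the construction above does not require that; it rests only on the linear identity together with the convexity of the column-stochastic cone.
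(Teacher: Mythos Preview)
Your proposal is correct and is essentially identical to the paper's own proof: both construct $M_3$ column-wise as $(M_3)_{k,a}=\gamma_a(M_1)_{k,a}+(1-\gamma_a)(M_2)_{k,a}$, then verify $\Gamma_3=\hat\Gamma M_3$ and that $M_3$ is column-stochastic. Your additional remarks about why per-coordinate weights cause no difficulty are a nice elaboration but not present in the paper.
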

\begin{proof}
See Appendix \ref{ap:B}.
\end{proof} 

This Theorem states that if two steady-state distributions $\Gamma_1$ and $\Gamma_2$ are \textit{Blackwell dominated} by a third steady-state distribution $\hat{\Gamma}$, then any distribution $\Gamma_3$ which is a convex combination of $\Gamma_1$ and $\Gamma_2$ will also be Blackwell dominated by $\hat{\Gamma}$.

The following section presents our numerical results which reveal regions in the parameter space for which these conditions hold. Within these regions we can then be guaranteed performance ordering.

\section{Numerical Results}
Section~\ref{pris_dil} contains a numerical example (based on the Prisoner's Dilemma) which illustrates the ability of the Lindbladian decision model to account for violations of the sure-thing principle and demonstrates an implementation of quickest change detection for this context. Section~\ref{sec:numerical} provides several computational results which allow us to show the existence of disjoint convex parameter space regions for which the detection performance is strictly ordered (i.e. the quickest detector does better when the decision maker has psychological parameters in one region rather than the other). This numerical study also suggests that agent rationality plays a key role in this performance ordering, such that observing the decisions or more rational agents will increase detection performance. 

\subsection{Prisoner's Dilemma Numerical Example}
\label{pris_dil}
Here we provide a tutorial numerical example using  Prisoner's Dilemma problem \cite{martinez2016quantum}. We demonstrate the ability of the quantum decision theory to account for violations of the sure-thing principle, and provide an implementation of the quickest detector for a psychological parameterization which results in this violation. The key takeaways are that the Lindbladian model \eqref{Lindblad} can account for violations of the sure-thing principle (which cannot be accounted for by classical models), and that the quickest detector implementing this model still performs reasonably well while this violation is occurring. 

\subsubsection{Construction of Lindbladian Operator}
\label{lindblad_construct}
Here we illustrate the construction of the Lindbladian operator \eqref{Lindblad} for the Prisoner's Dilemma example. Suppose that the two underlying states of nature are whether or not the opponent defects $(D)$ or cooperates $(C)$, i.e. $\mathcal{X} = \{ 1 \textrm{(cooperate)}, 2 \textrm{(defect)}\}$. The actions of the agent are also to either cooperate or defect, i.e. $\mathcal{A} = \mathcal{X}$, and this action will depend on the agent's belief in the underlying state (the opponent's choice) and the payoff matrix. In this case we have the payoffs $a = u(C|C), b = u(C|D), c = u(D|D), d = u(D|C)$. We have a four-dimensional space of states $\mathcal{H} = \mathcal{H}_{\mathcal{X}} \otimes \mathcal{H}_{\mathcal{A}} = \{\ket{CC},\ket{DC},\ket{CD},\ket{DD}\}$ since two actions (cooperate or defect) are each associated to two states of nature (opponent cooperates or defects). To construct the Lindbladian operator \eqref{Lindblad}, we need to construct the Hamiltonian $H$ and the Cognitive Matrix $C(\lambda, \phi)$ \eqref{gamma_mn}. Following \eqref{Pi_mat}, we build the matrix $\Pi(\lambda)$ as 
\begin{equation}
\Pi(\lambda) = \left[
    \begin{matrix}  
    1-\mu(\lambda) & \mu(\lambda) & 0 & 0 \\  
    1-\mu(\lambda) & \mu(\lambda) & 0 & 0 \\
    0 & 0 & 1-\nu(\lambda) & \nu(\lambda) \\
    0 & 0 & 1-\nu(\lambda) & \nu(\lambda)
    \end{matrix}  \right]
\end{equation}

where $\mu(\lambda) = \frac{d^{\lambda}}{a^{\lambda} + d^{\lambda}}$ and $\nu(\lambda) = \frac{c^{\lambda}}{b^{\lambda} + c^{\lambda}}$. Suppose the agent has belief in the opponents action (underlying state) given by $\eta(x)$, such that $\eta(1) = P(\textrm{opponent cooperates})$ and $\eta(2) = P(\textrm{opponent defects})$. Then, following \eqref{B_Matrix} we have
\begin{equation}
    B = \left[
    \begin{matrix}  
    \eta(1) & 0 & \eta(2) & 0 \\  
    0 & \eta(1) & 0 & \eta(2) \\
    \eta(1) & 0 & \eta(2) & 0 \\
    0 & \eta(1) & 0 & \eta(2)
    \end{matrix}  \right] \ 
    H = \left[
    \begin{matrix}  
    1 & 1 & 0 & 0 \\  
    1 & 1 & 0 & 0 \\
    0 & 0 & 1 & 1 \\
    0 & 0 & 1 & 1
    \end{matrix}  \right]
\end{equation}

This simple Hamiltonian $H$ also agrees with \cite{martinez2016quantum}, as well as those used in quantum rankings of complex networks \cite{sanchez2012quantum}. 

\subsubsection{Violation of the Sure Thing Principle}
The Sure-Thing Principle, as described in Section~\ref{practicality} dates back to Savage \cite{savage1972foundations} and can be intuitively understood as follows. Suppose there exist two states of nature $A$ and $B$, and two actions $a_1$ and $a_2$. If $a_1$ is preferred to $a_2$ when the state is known to be $A$, and $a_1$ is also preferred to $a_2$ when the state is known to be $B$, then $a_1$ should be preferred to $a_2$ when the state is unknown or there is uncertainty in the state. However this principle was refuted in an experiment of Tversky and Shafir \cite{tversky1992disjunction} and this violation has been regularly experimentally reproduced since.  Note that this principle (see Section~\ref{practicality} for more formal definition) follows from the axioms of classical probability, namely the law of total probability. Thus, any classically probabilistic model for human decision making will be unable to account for such violations, hence the need for generalized quantum models. 

Busemeyer et. al. \cite{busemeyer2006quantum} investigate experimental violations of the Sure-Thing Principle (STP) in the context of the Prisoner's Dilemma, with payoff values $a$=20, $b$=5, $c$ = 10, and $d$ = 25. They find a defection rate of 91\% when the opponent is known to defect and 84\% when the opponent is known to cooperate. The STP is violated since the defection rate drops to 66\% when the choice of the opponent is unknown. We use the previous Lindbladian construction to reproduce this violation \cite{martinez2016quantum}, see Fig.~\ref{fig:STP_viol}. 

\begin{figure}[h!]
  \includegraphics[width=\linewidth]{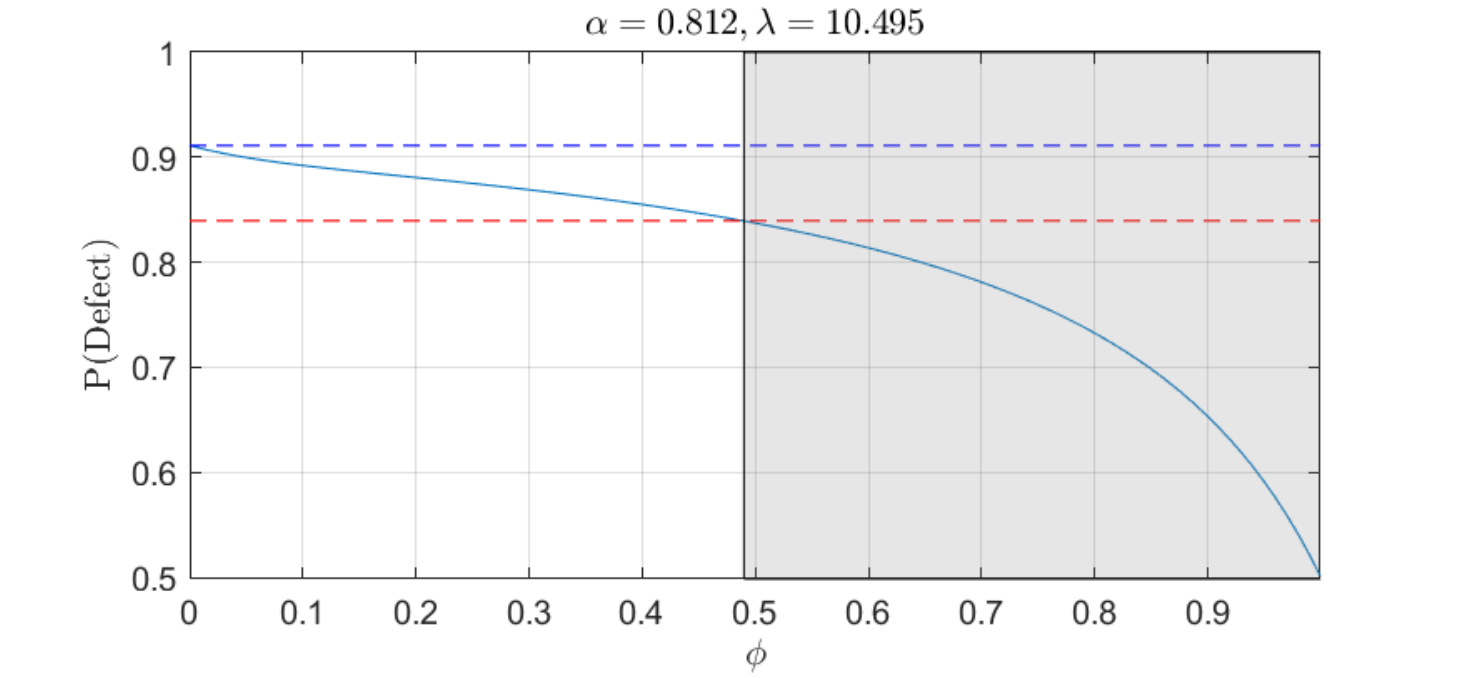}
  \caption{Violation of the Sure-Thing Principle. The probability of defection when the opponent is known to defect is 0.91 (dashed blue line). The probability of defection when the opponent is known to cooperate is 0.84 (dashed red line). The probability of defection when the choice of the opponent is unknown is given by the solid blue curve. Observe that within the shaded region, for $\phi > 0.49$, the probability of defection is not a \textit{convex combination} of those given certainty of the opponent's choice, and thus the STP is violated. This cannot be modeled with classical probabilistic decision models.}
  \label{fig:STP_viol}
\end{figure}

\subsubsection{Quickest Detector Implementation}
We implement the quickest change detection protocol of Section~\ref{QCD_prot} as well as the optimal policy computation of Section~\ref{QCD_OptPol}, within the context of this Prisoner's Dilemma problem. We take the underlying state to be the choice of the opponent, $x_n \in \{ 1 \textrm{(cooperate)}, 2 \textrm{(defect)}\}$. We assume this state jump changes from $2$ to $1$ according to a geometric distribution with parameter $p = 0.95$. We use the following simple measurement model: 
\begin{equation}
    \mathcal{Y} = \{1,2,3\}, \ \ B =  \left[\begin{matrix}  
        0.6 & 0.25 & 0.15 \\  
        0.15 & 0.25 & 0.6   
        \end{matrix} \right] 
\end{equation}   
where recall $\mathcal{Y}$ is the observation space, i.e. there are three possible observations. These observation are state-dependent with conditional probabilities given by $B_{x,y} = p(y_n = y  |x_n = x)$, for $y_n \in \{1,2,3\}, x_n \in \{1,2\}$. This observation is used in the computation \eqref{priv_bel} to obtain $\eta_n(x)$, which is input to the quantum decision map \eqref{act_map}.

The quantum decision maker chooses an action probabilistically according to the map \eqref{act_map}, with Lindbladian operator constructed as done in Section~\ref{lindblad_construct}. In order to incorporate a violation of the STP, we use the parametrization $(\alpha = 0.812, \lambda = 10.495, \phi = 0.9)$ (Observe from Fig.~\ref{fig:STP_viol} that this parametrization can result in an STP violation), and for simplicity we assume the quickest detector knows this parametrization. For false alarm penalty $f=5$ and delay penalty $d=1$, the quickest detector computes its optimal policy via the value iteration algorithm \eqref{value_itr}. This results in an optimal decision threshold of $\pi(1) = 0.834$ (denote this as $\pi'$), as illustrated in the optimal policy $\mu^*(\pi)$ plot in Fig.~\ref{fig:PD_OptPol}.

\begin{figure}[h!]
  \includegraphics[width=\linewidth]{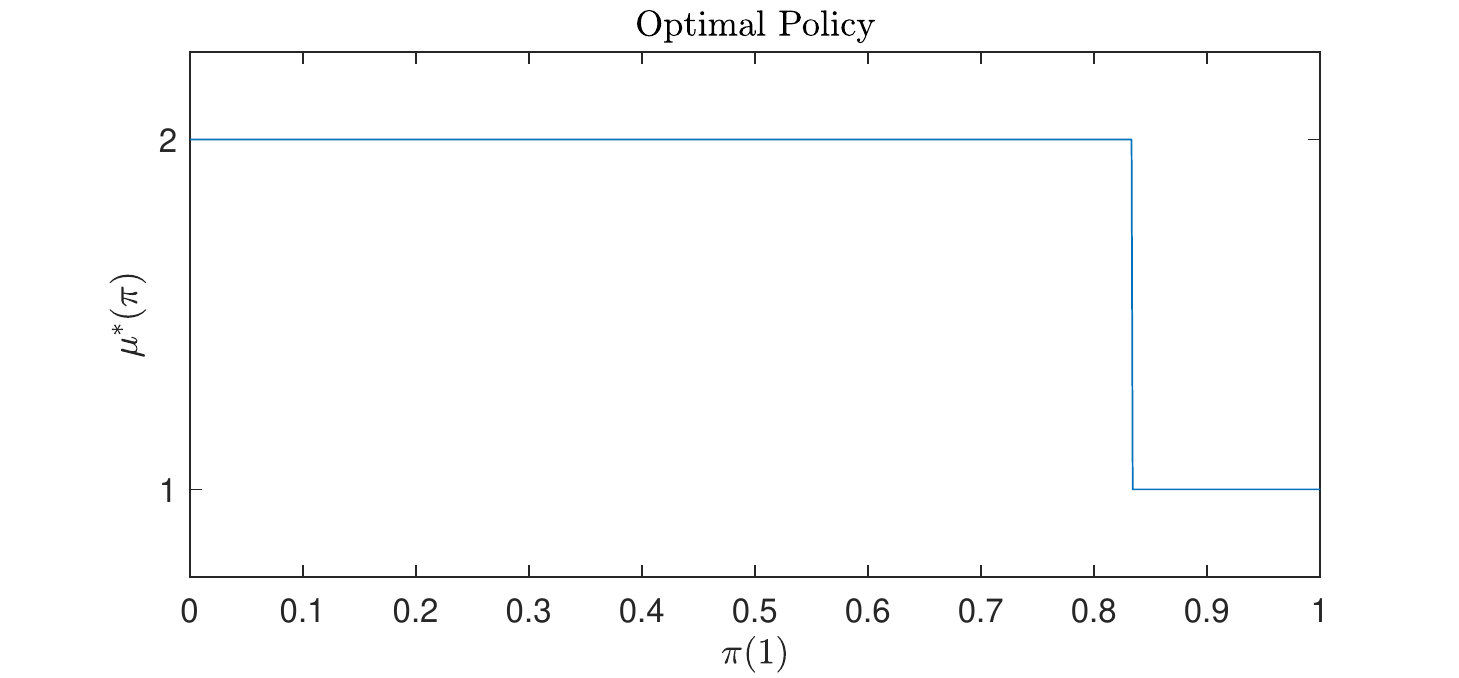}
  \caption{Optimal Policy of the quickest detector in the Prisoner's Dilemma setup In this case the optimal policy has a threshold at $\pi' = 0.834$. The sequential change detection protocol detailed in Section \ref{QCD_prot} results in an optimal policy $\mu^*(\pi)$ which has a single threshold. This is due to the concavity of the value function $\mathcal{V}(\pi)$, which is derived as a property resulting from the continuity of the distribution $\bar{\Gamma}_y^{\pi}(a)$ \ref{glob_dec_mak} with respect to $\pi$ and $y$.}
  \label{fig:PD_OptPol}
\end{figure}

Finally, we investigate the dependence of the optimal policy threshold on the false alarm to delay cost ratio. With delay cost fixed at one, we  computed the optimal policy threshold as a function of false alarm cost. This dependency is illustrated in Fig.~\ref{fig:OptThresVFal}. The figure  also shows this dependency for the \textit{classical} quickest detection protocol \eqref{clas_val_it}, where the detector directly observes the noisy sensor measurements. This reveals that the intermediate human decisions result in an optimal policy threshold which is larger than that resulting from classical quickest detection. 

\begin{figure}[h!]
  \includegraphics[width=\linewidth]{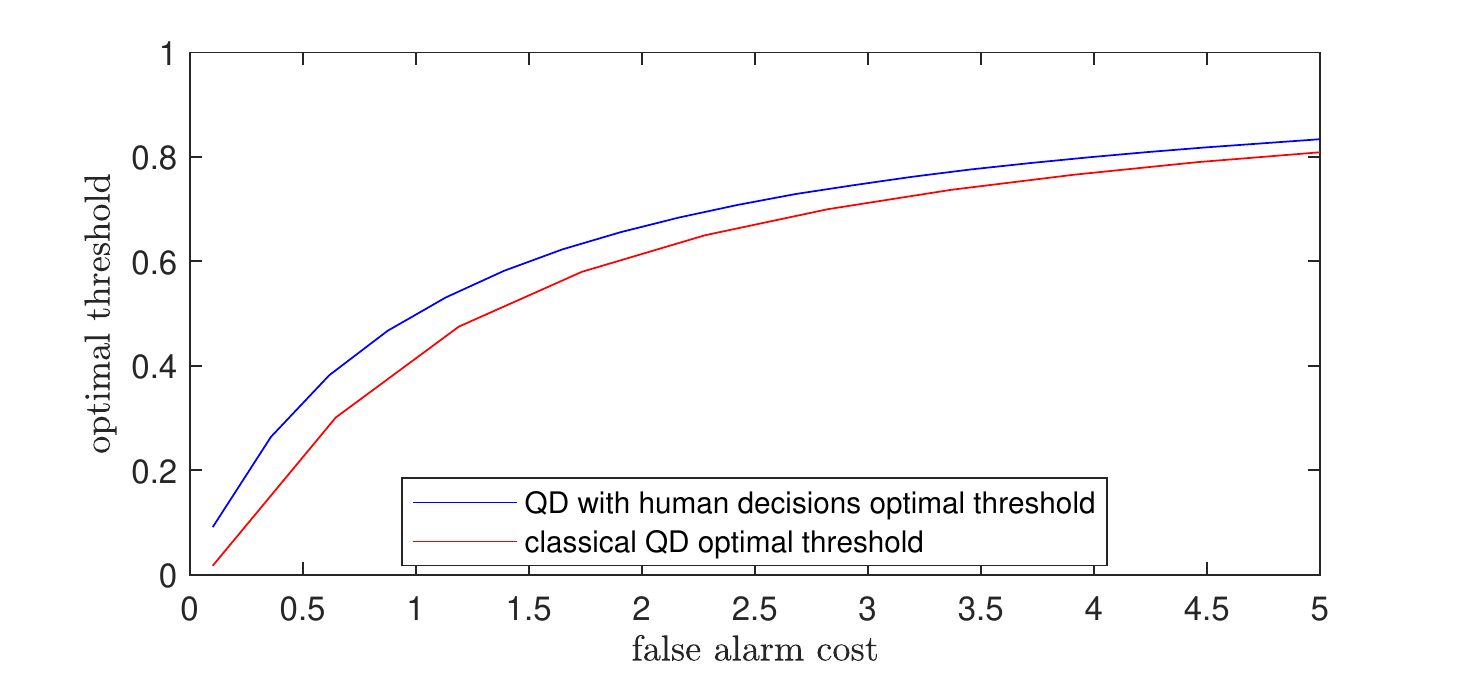}
  \caption{Optimal policy threshold for classical QCD (red), i.e detection directly using the noisy sensor measuremtns, and for our QCD protocl with intermediate human decisions (blue), plotted as functions of the false alarm cost (with delay cost = 1). Observe that the optimal threshold is always larger when there exist intermediate human decisions.}
  \label{fig:OptThresVFal}
\end{figure}

\subsection{Rationality Improves Detection Performance}
\label{sec:numerical}
We now present a series of computational results which allow us to verify the existence of convex regions in parameter space for which the performance of the quickest detector is strictly ordered. The first result, Example 1, verifies that for a specific subset of the parameter space, a convex combination of parameterizations which result in Blackwell dominated distributions also results in a Blackwell dominated distribution. The second result, Example~2,  verifies the converse, that a convex combination of parameterizations which result in distributions that are Blackwell dominating is also Blackwell dominating. These two in conjunction reveal that the Blackwell ordering is closed under convex parameter combinations, and thus we can interpolate this dominance ordering to hold between all points within the convex hulls of computed dominance points; this is studied in Example 3 below.

Denote $\bar{\Gamma}_{y,1}^{\pi}$ and $\bar{\Gamma}_{y,2}^{\pi}$ as the map \eqref{act_map} outputs for distinct Lindbladian parameterizations $(\alpha_1,\lambda_1,\phi_1)$ and $(\alpha_2,\lambda_2,\phi_2)$, respectively. Let $(\alpha_3,\lambda_3,\phi_3) = \epsilon(\alpha_1,\lambda_1,\phi_1) + (1-\epsilon)(\alpha_2,\lambda_2,\phi_2)$ for $\epsilon \in (0,1)$, and $\bar{\Gamma}_{y,3}^{\pi}$ be the resultant action distribution (from \eqref{act_map}) for Lindbladian parameterization $(\alpha_3,\lambda_3,\phi_3)$.

\textit{Numerical Verification 1}: We have verified numerically that for $\alpha_1,\alpha_2,\phi_1,\phi_2 \in [0.1,0.5]$, $\lambda_1,\lambda_2 \in [10,100]$, and $\epsilon \in (0,1)$, there exists $\{\gamma_a\}, a \in \{1,\dots,M\}$ such that $\bar{\Gamma}_{y,3}^{\pi}(a) = \gamma_a\bar{\Gamma}_{y,1}^{\pi} + (1-\gamma_a)\bar{\Gamma}_{y,2}^{\pi}$. In words, within the parameter confines defined, interpolating between two Lindbladian parameterizations via a convex combination results in an action distribution for which each action probability is a convex combination of (lies between) the action probabilities resulting from the initial two parameterizations. 

This numerical verification, along with the following two Theorems, will be used to prove our first computational result which reveals that, for certain regions in the parameter space, a Blackwell dominance order is closed under convex combinations.

\textit{Result 1. Performance dominance is closed under convex combinations of dominated distributions:} \\
Consider Lindbladian parameterizations $(\alpha_1,\lambda_1,\phi_1) \in [0.1,0.5]\times[10,100]\times[0.1,0.5], (\alpha_2,\lambda_2,\phi_2) \in [0.1,0.5]\times[10,100]\times[0.1,0.5], (\alpha_3,\lambda_3,\phi_3) \in [0.5,1]\times[10,100]\times[0,1]$  and respective resultant action distributions $\Gamma^{\pi}_1,\Gamma^{\pi}_2,\Gamma^{\pi}$ from the map \eqref{act_map} with prior $\pi$. Let $(\bar{\alpha},\bar{\lambda},\bar{\phi}) = \epsilon(\alpha_1,\lambda_1,\phi_1) + (1-\epsilon)(\alpha_2,\lambda_2,\phi_2)$ for $\epsilon \in [0,1]$, and $\Gamma_3^{\pi}$ the resultant action distribution. Denote $\mathcal{V}(\pi),\mathcal{V}_1(\pi),\mathcal{V}_2(\pi),\mathcal{V}_3(\pi)$ the value functions resulting from the value iteration algorithm \eqref{value_itr} using $\Gamma^{\pi},\Gamma^{\pi}_1,\Gamma^{\pi}_2,\Gamma_3^{\pi}$ in \eqref{glob_dec_mak}, respectively. Suppose there exists stochastic matrices $M_1^{\pi},M_2^{\pi}$ such that $\Gamma_1^{\pi} = \Gamma^{\pi}M_1^{\pi}$ and $\Gamma^{\pi}_2 = \Gamma^{\pi} M_2^{\pi}$. Then we have $\mathcal{V}_1(\pi) \geq \mathcal{V}(\pi), \mathcal{V}_2(\pi) \geq \mathcal{V}(\pi)$ and $\mathcal{V}_3(\pi) \geq \mathcal{V}(\pi)$ $\ \forall \pi \in \Pi.$\\
\textit{Verification}:
By the numerical verification 1, we have that 
\[ \exists \{\gamma_a\}_{a=1}^{\NumAct} : \ \Gamma^{\pi}(a) = \gamma_a\Gamma^{\pi}_1(a) + (1-\gamma_a)\Gamma^{\pi}_2(a)\]
Thus, by Theorem \ref{thm:convex_dom}, there exists a stochastic matrix $M_3^{\pi}$ such that $\Gamma_3^{\pi} = \Gamma^{\pi}M_3$. Then by invoking Theorem \ref{thm:val_great} using each equality $\Gamma_1^{\pi} = \Gamma^{\pi}M_1^{\pi},\  \Gamma_2^{\pi} = \Gamma^{\pi}M_2^{\pi}, \ \Gamma_3^{\pi} = \Gamma^{\pi}M_3^{\pi}$, the results follow.

Now we provide another numerical verification and two computational results which allow us to conclude closure of Blackwell dominance orderings in the opposite direction.

Denote $\bar{\Gamma}_{y,1}^{\pi}$ and $\bar{\Gamma}_{y,2}^{\pi}$ as the map \eqref{act_map} outputs for distinct Lindbladian parameterizations $(\alpha_1,\lambda_1,\phi_1)$ and $(\alpha_2,\lambda_2,\phi_2)$, respectively. Let $(\alpha_3,\lambda_3,\phi_3) = \epsilon(\alpha_1,\lambda_1,\phi_1) + (1-\epsilon)(\alpha_2,\lambda_2,\phi_2)$ for $\epsilon \in (0,1)$, and $\bar{\Gamma}_{y,3}^{\pi}$ be the resultant action distribution (from \eqref{act_map}) for Lindbladian parameterization $(\alpha_3,\lambda_3,\phi_3)$.

\textit{Numerical Verification 2}: We have verified numerically that for $\alpha_1,\alpha_2 \in [0.8,1] ,\phi_1,\phi_2 \in [0.1,0.5]$, $\lambda_1,\lambda_2 \in [10,100]$, and $\epsilon \in (0,1)$, there exists $\{\gamma_a\}, a \in \{1,\dots,M\}$ such that $\bar{\Gamma}_{y,3}^{\pi}(a) = \gamma_a\bar{\Gamma}_{y,1}^{\pi} + (1-\gamma_a)\bar{\Gamma}_{y,2}^{\pi}$. In words, within the parameter confines defined, interpolating between two Lindbladian parameterizations via a convex combination results in an action distribution for which each action probability is a convex combination of (lies between) the action probabilities resulting from the initial two parameterizations. \\

\textit{Result 2.  Performance dominance is closed under convex combinations of dominating distributions:} \\
Consider Lindbladian parameterizations $(\alpha_1,\lambda_1,\phi_1) \in [0.8,1]\times[10,100]\times[0.1,0.5], (\alpha_2,\lambda_2,\phi_2) \in [0.8,1]\times[10,100]\times[0.1,0.5], (\alpha_3,\lambda_3,\phi_3) \in [0.5,1]\times[10,100]\times[0,1]$  and respective resultant action distributions $\Gamma^{\pi}_1,\Gamma^{\pi}_2,\Gamma^{\pi}$ from the map \eqref{act_map} with prior $\pi$. Let $(\bar{\alpha},\bar{\lambda},\bar{\phi}) = \epsilon(\alpha_1,\lambda_1,\phi_1) + (1-\epsilon)(\alpha_2,\lambda_2,\phi_2)$ for $\epsilon \in [0,1]$, and $\Gamma_3^{\pi}$ the resultant action distribution. Denote $\mathcal{V}(\pi),\mathcal{V}_1(\pi),\mathcal{V}_2(\pi),\mathcal{V}_3(\pi)$ the value functions resulting from the value iteration algorithm \eqref{value_itr} using $\Gamma^{\pi},\Gamma^{\pi}_1,\Gamma^{\pi}_2,\Gamma_3^{\pi}$ in \eqref{glob_dec_mak}, respectively. Suppose there exists stochastic matrices $M_1^{\pi},M_2^{\pi}$ such that $\Gamma^{\pi} = \Gamma_1^{\pi}M_1^{\pi}$ and $\Gamma^{\pi} = \Gamma_2^{\pi} M_2^{\pi}$. Then we have $\mathcal{V}(\pi) \geq \mathcal{V}_1(\pi), \mathcal{V}(\pi) \geq \mathcal{V}_2(\pi)$ and $\mathcal{V}(\pi) \geq \mathcal{V}_3(\pi)$ $\ \forall \pi \in \Pi.$

\textit{Verification}:
By the numerical verification 2, we have that 
\[ \exists \{\gamma_a\}_{a=1}^{\NumAct} : \ \Gamma^{\pi}(a) = \gamma_a\Gamma^{\pi}_1(a) + (1-\gamma_a)\Gamma^{\pi}_2(a)\]
Thus, by Result 3, there exists a stochastic matrix $M_3^{\pi}$ such that $\Gamma^{\pi} = \Gamma_3^{\pi}M_3$. Then by invoking Theorem \ref{thm:val_great} using each equality $\Gamma^{\pi} = \Gamma_1^{\pi}M_1^{\pi},\  \Gamma^{\pi} = \Gamma_3^{\pi}M_2^{\pi}, \ \Gamma^{\pi} = \Gamma_3^{\pi}M_3^{\pi}$, the results follow.
\\

\textit{Result 3. Blackwell dominance is closed under convex combinations of dominating distributions:} \\
\label{thm:convex_dom2}
Take probability mass vectors $\hat{\Gamma},\Gamma_1,\Gamma_2 \in \reals^N (N \in \mathbb{N})$. Suppose there exist invertible column stochastic matrices $M_1, M_2 \in \reals^{N\times N}$ with the property that $M_1^{-1}, M_2^{-1}$ are "loosely column stochastic", meaning that each column sums to 1 but need not necessarily have all non-negative elements, such that 
$\hat{\Gamma} = \Gamma_1 M_1$ and $\hat{\Gamma} = \Gamma_2 M_2$. Form $\Gamma_3 \in \reals^N$ as $\Gamma_3(a) = \gamma_a\Gamma_1(a) + (1-\gamma_a)\Gamma_2(a) ,\ \gamma_a \in [0,1], \  \forall a \in \{1,\dots,N\}$. Then there exists a stochastic matrix $M_3$ such that $\hat{\Gamma} = \Gamma_3 M_3$.

\textit{Verification}:
We have $\hat{\Gamma} = \Gamma_1 M_1$ and $\hat{\Gamma} = \Gamma_2 M_2$, and thus 
\begin{align}
    \begin{split}
    \label{inverse_cond}
        &\hat{\Gamma}M_1^{-1} = \Gamma_1, \ \  \hat{\Gamma}M_2^{-1} = \Gamma_2, \\ &\sum_{i=1}^{\NumAct} M_1^{-1}(a,i)  =  \sum_{i=1}^{\NumAct} M_2^{-1}(a,i) = 1 \ \forall a \in \{1,\dots,\NumAct\}
    \end{split}
\end{align}
Observe that Theorem \ref{thm:convex_dom} also holds if $M_1$ and $M_2$ are only assumed to by loosely column stochastic, in which case $M_3$ is also loosely column stochastic. Thus Theorem \ref{thm:convex_dom} can be invoked on the equalities \eqref{inverse_cond} to construct loosely column stochastic matrix $M_3^{-1}$ such that $\hat{\Gamma}M_3^{-1} = \Gamma_3$. We observe numerically that any such construction $M_3^{-1}$, intialized by matrices $M_1, M_2$ constructed such that $\hat{\Gamma} = \Gamma_1M_1, \hat{\Gamma} = \Gamma_2M_2$ for $\hat{\Gamma},\Gamma_1,\Gamma_2$ steady state Lindbladian distributions and scalars $\{\gamma_a\}_{a=1}^{\NumAct} \in [0,1]^{\NumAct}$, has inverse $M_3 = (M_3^{-1})^{-1}$ which is column stochastic.

Here we demonstrate a numerical consequence of the preceding Theorems. There exist disjoint convex regions $R_1$ and $R_2$ in the parameter space $\{\alpha,\lambda,\phi\}$, such that the value functions $\mathcal{V}_{p_1}$ and $\mathcal{V}_{p_2}$ resulting from  Lindbladian parameterizations \eqref{act_map} $p_2 = \{\alpha_1,\lambda_1,\phi_1\} \in R_1$, $p_2= \{\alpha_2,\lambda_2,\phi_2\} \in R_2$ satisfy
\[\mathcal{V}_{p_2}(\pi) \geq \mathcal{V}_{p_1}(\pi) \ \forall \pi \in \Pi, \ \forall p_1 \in R_1, p_2 \in R_2 \]

\begin{figure}[h!]
  \includegraphics[width=\linewidth]{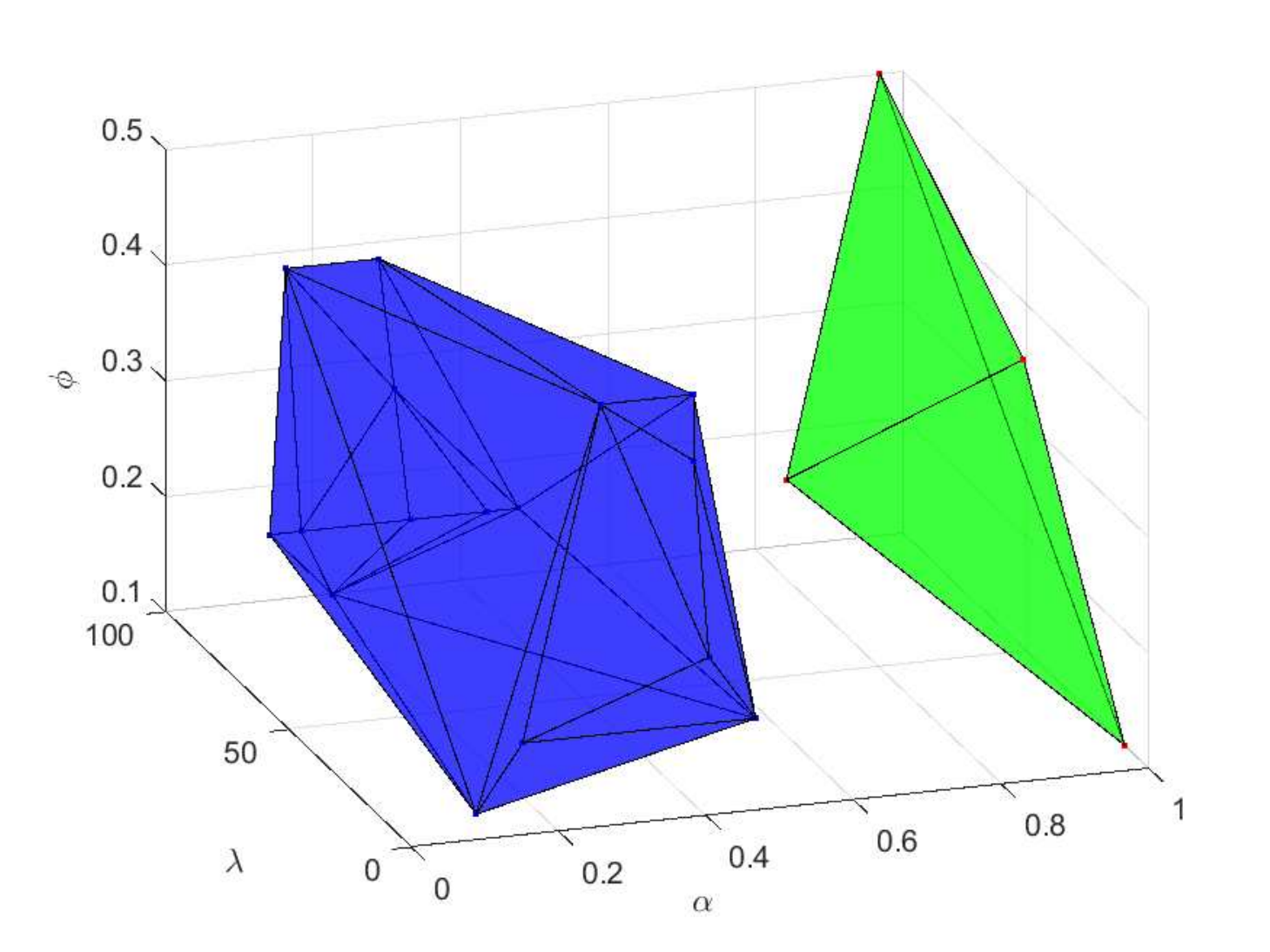}
  \caption{Convex parameter regions inducing a Blackwell Order. Denote the green and blue regions by $R_1$ and $R_2$, respectively. For any points $p_1 \in R_1, p_2 \in R_2$, the parameterized Lindbladian models $\mathcal{L}_{p_1}, \mathcal{L}_{p_2}$ which characterize the map \eqref{act_map}, when used in the local decision making process of Fig. \ref{fig:QCD_Proc}, result in respective optimal value functions $\mathcal{V}_{p_1}, \mathcal{V}_{p_2}$ satisfying $\mathcal{V}_{p_2}(\pi) \geq \mathcal{V}_{p_1}(\pi) \ \forall \pi \in \Pi$. The takeaway message is that quickest change detection performance is strictly better when the human decision maker has psychological parameters in the green region compared to the blue region.}
  \label{fig:ConvDom}
\end{figure}

Figure \ref{fig:ConvDom} demonstrates two such regions, the green region corresponding to $R_1$ and the blue region corresponding to $R_2$. In words, this result means that for the quickest change detection system of Fig. \ref{fig:QCD_Proc}, given local human decision makers acting with a decision making process characterized by the Lindbladian evolution with parameter $p_1 \in R_1$ and given another acting with process parameterized by $p_2 \in R_2$, the quickest detector's optimal cost in the former case is upper bounded by that of the latter case. Intuitively, this means that the quickest change detection system performs strictly better when the local decision maker has psychological parametrization in $R_1$ than when the local decision maker has psychological parametrization in $R_2$. 

Examining the regions $R_1$ and $R_2$ in  Fig. \ref{fig:ConvDom}, it can be inferred that the value of the $\alpha$ parameter plays a significant role in the system performance. Recall that $\alpha$  interpolates between the purely 'rational' Markovian dynamics and the 'irrational' quantum dynamics. This suggests that higher rationality in the local decision maker will translate to improved  detection performance on the part of the quickest detector. 

\section{Conclusion}
At an abstract level, statistical signal processing deals with signal estimation using sensors, while psychology aims to model (understand) human decision making.
We have presented a sequential  change detection framework involving a human decision maker (modeled via quantum decision theory from recent results in psychology), sensor and a quickest detector.
Quantum decision theory captures important features of human decision making such as order effects and violation of the sure thing principle (total probability rule).
 The framework of this paper contributes to the area of human-sensor interface design and analysis. 
 
 The aim of our quickest detection formulation was to detect a change in the underlying state by observing the  human decisions which are influenced by the state.
 We characterized the structure of the quickest detection policy. We  showed that the optimal policy has a single threshold, and that the optimal cost incurred is lower bounded by that of the classical quickest detection framework, suggesting that the intermediate human decisions cannot improve the detection performance. We have also provided an upper bound on the cumulative cost incurred by the analyst when only a probabilistic estimate of the human's psychological parametrization is available. This upper bound is given in terms of the cumulative cost incurred when the analyst has perfect knowledge of the parameters and the KL-Divergence between action distributions induced by the true and estimated parametrizations. Finally, we also showed that certain humans in the loop are better than others (w.r.t. quantum decision parameters) in terms of the overall cost in quickest detection. 

One aspect we have not considered in this paper is performance analysis of the quickest detector with a quantum decision maker. For performance analysis of quickest detection we refer to the important papers \cite{PH08}, \cite{https://doi.org/10.48550/arxiv.2110.01581}, \cite{TM10}, \cite{Mou86}, and references therein. 
It is worthwhile investigating  how the Linbladian parameters for the human decision maker affect the performance of the quickest detector.

\appendices
\section{Proofs of Theorems}
\subsection{Proof of Theorem \ref{thm:sing_thres}}
\label{pf:sing_thres}

\begin{proof}
We first show that the action distribution $\bar{\Gamma}_y^{\pi}(a) := \PR(a | \pi,y)$ induced as the unique steady-state distribution for the Lindbladian $\CLind$ with parameters $(\alpha,\lambda,\phi)$ and initial state $\rho_0$, via the map \eqref{act_map}, is a continuous function of $\pi$ and $y$. We then use this within an induction argument in the value iteration algorithm \eqref{value_itr} to complete the argument. 

The vectorized solution of \eqref{Lind_ev} \cite{martinez2016quantum}, for any vectorized initial condition $\vec{\rho}(0)$, is \[\vec{\rho}(t) = \exp(\CLind t)\vec{\rho}(0)\]
 Fix $t > 0$ and, examining the structure of the operator $\CLind$ \eqref{Lindblad}, consider the map \[\Lambda_{(m,n)}^t: \eta_k(\cdot) \to [\exp(\CLind t) \vec{\rho}(0)]_{(m,n)}\] By inspection, each element $[\CLind]_{(m,n)}$ of \eqref{Lindblad} is continuous with respect to $\gamma_{(m,n)}$ and thus with respect to $\eta_k(\cdot)$ (by \eqref{gamma_mn} and \eqref{B_Matrix}). Thus the map $\Lambda_{(m,n)}^t$ is continuous with respect to $y$ and $\pi \ \forall (m,n) \in [1,d]^2, t \in \reals_+$. Also observe that $\eta_k(\cdot)$ is a continuous function of $\pi$ (for a fixed observation $y$) and $y$ (for a fixed prior $\pi$), as a Bayesian update. So we have that the action distribution at time $t$
 \[\bar{\Gamma}_y^{\pi}(a,t) = \Tr(P_a \exp(\CLind t) \vec{\rho}(0) P_a^{\dagger})\]
 is continuous with respect to $y$ and $\pi$ $\ \forall  (m,n) \in [1,d]^2, t \in \reals_+$. Thus the stationary distribution 
 \[\bar{\Gamma}_y^{\pi}(a) = \lim_{t \to \infty} \bar{\Gamma}_y^{\pi}(a,t)\] 
 is continuous with respect to $\pi$ and $y$.
We now use induction on the value iteration algorithm \eqref{value_itr}. The algorithm begins with \[\mathcal{V}_0(\pi) = 0 \ \forall \pi \in \Pi\] 
Thus $\mathcal{V}_0(\pi)$ is trivially concave. Now assume $\mathcal{V}_k(\pi)$ is concave for some $k \in \mathbb{N}$. We have the update 
\begin{align}
    \begin{split}
        &\mathcal{V}_{k+1}(\pi) = \min\{C(\pi,1), \\ &\ C(\pi,2) +  \sum_{a \in \mathcal{A}_1 \times \mathcal{A}_2} \mathcal{V}_k (\bar{T}(\pi,a))\bar{\sigma}(\pi,a)\}
    \end{split}
\end{align}
Observe from \eqref{value_itr} that $\mathcal{V}_k(\pi)$ is positively homogeneous; that is, for any $\alpha>0$, $\mathcal{V}_k(\alpha \pi) = \alpha\mathcal{V}_k(\pi)$. Choosing $\alpha = \bar{\sigma}(\pi,a)$ yields
\begin{align}
    \begin{split}
        &\mathcal{V}_{k+1}(\pi) = \min\{C(\pi,1), \\ &\ C(\pi,2) +  \sum_{a \in \mathcal{A}_1 \times \mathcal{A}_2} \mathcal{V}_k (R_{\pi}(a)P'\pi)\}
    \end{split}
\end{align}
Recall that $R_{\pi}(a)$ is computed via \eqref{act_prob}, and thus we have that $R_{\pi}(a)$ is a continuous function of $\pi, \ \forall a \in \mathcal{A}$. Also recall that $C(\pi,1)$ and $C(\pi,2)$ are linear in $\pi$. Thus $\mathcal{V}_{k+1}(\pi)$ is concave. This completes the induction step. Now the value iteration algorithm \eqref{value_itr} converges pointwise, so the optimal value function 
\[\mathcal{V}(\pi) = J_{\mu}^*(\pi) = \lim_{k \to \infty} \mathcal{V}_k(\pi)\]
is concave. This immediately implies, by \eqref{bellman}, that the optimal policy $\mu^*(\pi)$ cannot have more than one threshold.
\end{proof}

\begin{corollary}
\label{cor1}
Recall that $\bar{\Gamma}_y^{\pi}(a)$ is inherently dependent on a choice of parameters $(\alpha,\lambda,\phi)$. From the Proof of Theorem \ref{thm:sing_thres} and by inspection of the Lindbladian structure \eqref{Lindblad}, we have that $\bar{\Gamma}_y^{\pi}(a)$ is continuous with respect to parameters $(\alpha,\lambda,\phi)$. Then also $R_{x,\pi}(a)$ \eqref{act_prob} is continuous with respect to $(\alpha,\lambda,\phi)$.
\end{corollary}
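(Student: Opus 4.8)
The plan is to re-run the continuity argument from the proof of Theorem~\ref{thm:sing_thres}, but now tracking the dependence of every object on the parameter triple $(\alpha,\lambda,\phi)\in\pdom$ rather than on $(\pi,y)$. The chain has three links: (i) each entry of the generator $\CLind$ is continuous in $(\alpha,\lambda,\phi)$; (ii) this passes through the matrix exponential, so the finite-time distribution $\bar{\Gamma}_y^{\pi}(a,t)$ is continuous in the parameters for every fixed $t$; and (iii) the continuity survives the steady-state limit $t\to\infty$. Continuity of $R_{x,\pi}(a)$ then drops out of \eqref{act_prob} by integrating against the fixed observation likelihood.

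For link (i), inspecting \eqref{Lindblad}, the parameter $\alpha$ appears only affinely (through $1-\alpha$ and $\alpha$ multiplying fixed operators), while $\lambda$ and $\phi$ enter only through the coefficients $\gamma_{(m,n)}=[C(\lambda,\phi)]_{m,n}$. By \eqref{gamma_mn} the cognitive matrix $C(\lambda,\phi)=(1-\phi)\Pi^{T}(\lambda)+\phi B^{T}$ is affine in $\phi$, and by \eqref{Pi_mat}--\eqref{subj_util} its $\lambda$-dependence is through the ratios $u(a_j|\mathcal{E}_l)^{\lambda}/\sum_j u(a_j|\mathcal{E}_l)^{\lambda}=\exp(\lambda\log u(a_j|\mathcal{E}_l))/(\cdots)$, which are smooth in $\lambda$ whenever the utilities are strictly positive (the denominators never vanish). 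Composing these continuous maps, every entry of $\CLind$ is jointly continuous on $\pdom$. For link (ii), the matrix exponential $X\mapsto\exp(Xt)$ is real-analytic, hence continuous; so for each fixed $t>0$ the map $(\alpha,\lambda,\phi)\mapsto\exp(\CLind t)$ is continuous, and therefore so is $\bar{\Gamma}_y^{\pi}(a,t)=\Tr(P_a\exp(\CLind t)\vec{\rho}(0)P_a^{\dagger})$, being a fixed linear functional of it.

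Link (iii) is the main obstacle: a pointwise limit of continuous functions need not be continuous, so extra structure at the steady state is required. I would exploit the characterization established in \cite{martinez2016quantum}: the steady state is the (trace-normalized) element of $\ker\CLind$. Provided this kernel is one-dimensional and the remainder of the spectrum of $\CLind$ stays bounded away from the imaginary axis locally uniformly in $(\alpha,\lambda,\phi)$ --- which is what the existence-and-uniqueness result of \cite{martinez2016quantum} underwrites --- the associated spectral projector depends continuously on $\CLind$, hence on $(\alpha,\lambda,\phi)$, and so does $\bar{\Gamma}_y^{\pi}(a)$; equivalently, one solves the linear system $\CLind\vec{\rho}=0$, $\Tr\rho=1$ and invokes continuity of the solution in the coefficient matrix. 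A fallback route, should the spectral-gap bound be awkward to state, is to establish a relaxation-time estimate uniform on compact subsets of $\pdom$, so that $\bar{\Gamma}_y^{\pi}(a,t)\to\bar{\Gamma}_y^{\pi}(a)$ uniformly in the parameters and the limit inherits continuity; I would present the kernel/projector argument as the primary one.

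Finally, continuity of $R_{x,\pi}(a)$ follows from \eqref{act_prob}: since $0\le\bar{\Gamma}_y^{\pi}(a)\le 1$ for all $y$ and $B_{x,\cdot}$ is a fixed probability density, dominated convergence applied along any convergent parameter sequence gives $R_{x,\pi}(a)=\int_{\mathcal{Y}}\bar{\Gamma}_y^{\pi}(a)B_{x,y}\,dy$ continuous in $(\alpha,\lambda,\phi)$; when $\mathcal{Y}$ is finite this is just a finite sum of continuous functions. Since $R_{\pi}(a)=\textrm{diag}(R_{1,\pi}(a),R_{2,\pi}(a))$ is built entrywise from these, it too depends continuously on the parameters, which completes the plan.
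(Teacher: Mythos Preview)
Your approach is precisely what the paper intends: the corollary carries no separate proof in the paper beyond asserting that the continuity argument of Theorem~\ref{thm:sing_thres} transfers verbatim once one checks, by inspection of \eqref{Lindblad}--\eqref{B_Matrix}, that the entries of $\CLind$ depend continuously on $(\alpha,\lambda,\phi)$. Your link-(iii) discussion is in fact more careful than the paper, which in both Theorem~\ref{thm:sing_thres} and here simply passes to the $t\to\infty$ limit without justification; your kernel/spectral-projector argument is a genuine strengthening rather than a deviation.
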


\subsection{Proof of Theorem \ref{thm:lower_bound}}
\label{pf:lower_bound}
\begin{proof}
It is well known \cite{krishnamurthy2016partially} that the value function $\underline{\mathcal{V}}_k(\pi)$ for classical quickest detection is concave over $\Pi$ for any $k$. First recall that the quantum decision maker's action probability distribution, from the map \eqref{act_map}, given public belief $\pi$ and observation $y \in \mathcal{Y}$ is denoted 
\[\bar{\Gamma}_y^{\pi}(a) := \PR(a | \pi,y) \]
as in \eqref{act_prob}. First we express the public belief update \eqref{glob_dec_mak} in terms of the private belief update \eqref{priv_bel} as 
\begin{align}
    \begin{split}
        &\bar{T}(\pi,a) = \sum_{y \in \mathcal{Y}}T(\pi,y)\frac{\sigma(\pi,y)}{\bar{\sigma}(\pi,a)}\bar{\Gamma}_y^{\pi}(a), \\ &\bar{\sigma}(\pi,a) = \sum_{y \in \mathcal{Y}}\sigma(\pi,y)\bar{\Gamma}_y^{\pi}(a)
    \end{split}
\end{align}
Since $\underline{V}_k(\cdot)$ is concave for $\pi \in \Pi$, using Jensen's Inequality it follows that 
\begin{align*}
    \begin{split}
        &\underline{\mathcal{V}}_k(\bar{T}(\pi,a)) = \underline{\mathcal{V}}_k \left(\sum_{y \in \mathcal{Y}}T(\pi,y) \frac{\sigma(\pi,y)}{\bar{\sigma}(\pi,a)}\bar{\Gamma}_y^{\pi}(a) \right) \\ &\geq \sum_{y \in \mathcal{Y}}\underline{\mathcal{V}}_k(T(\pi,y))\frac{\sigma(\pi,y)}{\bar{\sigma}(\pi,a)}\bar{\Gamma}_y^{\pi}(a)
    \end{split}
\end{align*}
Therefore 
\begin{equation}
\label{ineq}
    \sum_{a \in \mathcal{A}}\underline{\mathcal{V}}_k(\bar{T}(\pi,a))\bar{\sigma}(\pi,a) \geq \sum_{y \in \mathcal{Y}} \underline{\mathcal{V}}_k(T(\pi,y))\sigma(\pi,y)
\end{equation}
Now we proceed by induction on the value iteration algorithm \eqref{value_itr}. Assume $\mathcal{V}_k(\pi) \geq \underline{\mathcal{V}}_k(\pi)$ for $\pi \in \Pi$. Then
\begin{align*}
    \begin{split}
        &C(\pi,2) + \sum_{a \in \CA}\mathcal{V}_k(\bar{T}(\pi,a))\bar{\sigma}(\pi,a) \\ &\geq C(\pi,2) + \sum_{a \in \CA}\underline{\mathcal{V}}_k(\bar{T}(\pi,a))\bar{\sigma}(\pi,a) \\ 
        &\geq C(\pi,2) + \sum_{y \in \mathcal{Y}}\underline{\mathcal{V}}_k(T(\pi,y))\sigma(\pi,y)
    \end{split}
\end{align*}
where the second inequality follows from \eqref{ineq}. Thus $\mathcal{V}_{k+1}(\pi) \geq \underline{\mathcal{V}}_{k+1}(\pi)$ and the induction step is complete. The value iteration algorithm converges pointwise and so $\mathcal{V}(\pi) \geq \underline{\mathcal{V}}(\pi)$, completing the proof.
\end{proof}

\subsection{Proof of Theorem \ref{thm:estimate}}
\label{ap:pf1}

\begin{proof}
The proof is adapted from that provided for Theorem 14.9.1 of \cite{krishnamurthy2016partially}. We first note that by the reasoning of \cite{johnston2006opportunistic} (Appendix, Proof of Theorem 2), POMDPs $\theta $ and $\hat{\theta}$ have implicit discount factor \[\gamma = 1-p \] The cumulative cost incurred by applying policy $\mu(\pi)$ to model $\theta$ satisfies at time $n$
\begin{equation}
    J_{\mu}^{(n)}(\pi;\theta) = C'_{\mu(\pi)}\pi + \dcost \sum_{a}J_{\mu}^{(n-1)}(T(\pi,a,\mu(\pi);\theta)\sigma(\pi,a,\mu(\pi);\theta).
\end{equation}
Therefore, the absolute difference in cumulative cost for models $\theta,\bar{\theta}$ satisfies 
\begin{align}
\label{cum_cost_diff}
    \begin{split}
        &|J_{\mu}^{(n)}(\pi;\theta) - J_{\mu}^{(n)}(\pi;\hat{\theta})| \\
        &\leq \dcost \sum_a \sigma(\pi,a,\mu(\pi);\theta)|J_{\mu}^{(n-1)}(T(\pi,a,\mu(\pi);\theta)) - \\&J_{\mu}^{(n-1)}(T(\pi,a,\mu(\pi);\hat{\theta}))| \\
        &+ \dcost \sum_a J_{\mu}^{(n-1)}(T(\pi,a,\mu(\pi);\hat{\theta}))|\sigma(\pi,a,\mu(\pi);\theta)\\&-\sigma(\pi,a,\mu(\pi);\hat{\theta})| \\
        &\leq \dcost \sup_{\pi \in \Pi}|J_{\mu}^{(n-1)}(\pi,\theta) - J_{\mu}^{(n-1)}(\pi;\hat{\theta})|\sum_a \sigma(\pi,a,\mu(\pi);\theta) \\
        & + \dcost \sup_{\pi \in \Pi}J_{\mu}^{(n-1)}(\pi;\hat{\theta})\sum_a |\sigma(\pi,a,\mu(\pi);\theta - \sigma(\pi,a,\mu(\pi);\hat{\theta})|.
    \end{split}
\end{align}
Observe that $\sum_a \sigma(\pi,a.\mu(\pi);\theta) = 1$. Then evaluating $\sigma(\pi,a;\theta) = \textbf{1}'R_{\pi}(a)P'\pi$ and $\sigma(\pi,a;\hat{\theta}) = \textbf{1}'\hat{R}_{\pi}(a)P'\pi$ yields

\begin{align}
    \begin{split}
        &\sum_a | \sigma(\pi,a,\mu(\pi);\theta) - \sigma(\pi,a.\mu(\pi);\hat{\theta})| \\
        &\leq \sum_a \sum_i \sum_j |R_{j,\pi}(a)P_{ij} - \hat{R}_{j,\pi}(a)P_{ij}|\pi(i) \\
        &\leq \max_i \sum_a \sum_j |R_{j,\pi}(a)P_{ij} - \hat{R}_{j,\pi}(a)P_{ij} | \\
        & = \max_i \sum_j P_{ij} \sum_a |R_{j,\pi}(a) - \hat{R}_{j,\pi}(a) | \\
        &\leq \sqrt{2} \max_i \sum_j P_{ij} [D(R_{j,\pi}\| \hat{R}_{j,\pi})]^{1/2}
    \end{split}
\end{align}
where the last inequality follows from Pinsker's inequality. Now we also have
\begin{equation}
    \sup_{\pi \in \Pi}J_{\mu}^{(n-1)}(\pi;\hat{\theta}) \leq \frac{1}{1-\dcost}\max_i C(e_i,u).
\end{equation}
We use these bounds in \eqref{cum_cost_diff} to obtain
\begin{align}
\label{14.45}
    \begin{split}
       & \sup_{\pi \in \Pi}|J_{\mu}^{(n)}(\pi;\theta) - J_{\mu}^{(n)}(\pi;\hat{\theta})| \\
       &\leq \dcost \sup_{\pi \in \Pi} |J_{\mu}^{(n-1)}(\pi;\theta) - J_{\mu}^{(n-1)}(\pi;\hat{\theta})| \\
       &+ \frac{\sqrt{2}\dcost}{1-\dcost} \max_i C(e_i,u) \sup_{\pi \in \Pi}\sum_j P_{ij} [D(R_{j,\pi}\| \hat{R}_{j,\pi})]^{1/2}
    \end{split}
\end{align}
Now starting with $J_{\mu}^{(0)}(\pi;\theta) = J_{\mu}^{(0)}(\pi;\hat{\theta}) = 0$, unraveling \eqref{14.45} yields
\begin{align}
\label{14.42}
\begin{split}
    &\sup_{\pi \in \Pi}|J_{\mu}(\pi;\theta) - J_{\mu}(\theta;\hat{\theta})| \\
    &\leq\frac{\sqrt{2}}{1-\dcost} \max_i C(e_i,u) \sup_{\pi \in \Pi}\sum_j P_{ij} [D(R_{j,\pi}\| \hat{R}_{j,\pi})]^{1/2}
\end{split}
\end{align}

Note that trivially we have
\begin{align}
    \begin{split}
     &J_{\mu^*(\hat{\theta})}(\pi,\theta) \\&\leq J_{\mu^*(\hat{\theta})}(\pi,\hat{\theta}) + \sup_{\pi}|J_{\mu^*(\hat{\theta}}(\pi,\theta) - J_{\mu^*(\hat{\theta}}(\pi,\hat{\theta})| \\
     & J_{\mu^*(\theta)}(\pi,\hat{\theta}) \\&\leq J_{\mu^*(\theta)}(\pi,\theta) + \sup_{\pi}|J_{\mu^*(\theta}(\pi,\theta) - J_{\mu^*(\theta}(\pi,\hat{\theta})|
    \end{split}
\end{align}
Also, by definition, $J_{\mu^*(\hat{\theta})}(\pi,\hat{\theta}) \leq J_{\mu^*(\theta)}(\pi,\hat{\theta})$, so
\begin{align}
    \begin{split}
        &J_{\mu^*(\hat{\theta})}(\pi,\theta) \leq J_{\mu^*(\theta)}(\pi,\theta) \\
        &+ \sup_{\pi}|J_{\mu^*(\hat{\theta})}(\pi,\theta) - J_{\mu^*(\hat{\theta})}(\pi,\hat{\theta})| \\ 
        &+ \sup_{\pi}|J_{\mu^*(\theta)}(\pi,\theta) - J_{\mu^*(\theta)}(\pi,\hat{\theta})| \\
        & \leq J_{\mu^*(\theta)}(\pi,\theta) + 2\sup_{\mu} \sup_{\pi}|J_{\mu^*(\hat{\theta})}(\pi,\theta) - J_{\mu^*(\hat{\theta})}(\pi,\hat{\theta})|.
    \end{split}
\end{align}
Then from \eqref{14.42}, \eqref{ineq_result} follows.
\end{proof}

\subsection{Proof of Theorem \ref{thm:val_great}}
\label{ap:val}
\begin{proof}
Consider the update \eqref{glob_dec_mak} and define  
   \[R_{x,\pi}(a) = \int_{\mathcal{Y}}{\Gamma}_{y}^{\pi_{n-1}}(a)B_{x,y}dy
      \]
      and 
    \[\hat{R}_{x,\pi}(a) = \int_{\mathcal{Y}}\hat{\Gamma}_{y}^{\pi_{n-1}}(a)B_{x,y}dy \]
Using $\Gamma^{\pi}(a) = \sum_{i=1}^{\NumAct} \hat{\Gamma}^{\pi}(i)M(i,a)$ (where $\NumAct$ is the cardinality of the action space) yields
\begin{align}
    \begin{split}
        &R_{x,\pi}(a) = \int_{\mathcal{Y}}\sum_{i=1}^{\NumAct} \hat{\Gamma}_{y}^{\pi_{n-1}}(i)M(i,a)B_{x,y}dy \\ 
        &= \sum_{i=1}^{\NumAct}\int_{\mathcal{Y}}\hat{\Gamma}_{y}^{\pi_{n-1}}(i)B_{x,y}dy M(i,a) \\
        &= \sum_{i=1}^{\NumAct} \hat{R}_{x,\pi}(i) M(i,a)
    \end{split}
\end{align}
Now, following \eqref{glob_dec_mak}:
\begin{align}
    \begin{split}
        &{T}(\pi,a) = \frac{R_{\pi}(a)P'\pi}{{\sigma}(\pi,a)}, \ {\sigma}(\pi,a) = \boldsymbol{1}'R_{\pi}(a)P'\pi \\
        &\hat{T}(\pi,a) = \frac{\hat{R}_{\pi}(a)P'\pi}{\hat{\sigma}(\pi,a)}, \ \hat{\sigma}(\pi,a) = \boldsymbol{1}'\hat{R}_{\pi}(a)P'\pi \\
        & R_{\pi}(a) = \textrm{diag}(R_{1,\pi}(a), R_{2,\pi}(a)) \\ 
        & \hat{R}_{\pi}(a) = \textrm{diag}(\hat{R}_{1,\pi}(a), \hat{R}_{2,\pi}(a))
    \end{split}
\end{align}
Now observe that we can manipulate $T(\pi,a)$ in the following way:
\begin{align}
    \begin{split}
    \label{T_form}
        &T(\pi,a) = \frac{R_{\pi}(a)P'\pi}{{\sigma}(\pi,a)} = \frac{\sum_{i=1}^{\NumAct} \hat{R}_{\pi}(i)M(i,a) P'\pi}{{\sigma}(\pi,a)} \\
        &= \frac{\sum_{i=1}^{\NumAct} \hat{R}_{\pi}(i)P'\pi M(i,a)}{{\hat{\sigma}}(\pi,a)}
        \frac{{\hat{\sigma}}(\pi,a)}{{\sigma}(\pi,a)} \\
        &= \sum_{i=1}^{\NumAct}\hat{T}(\pi,i)\frac{\hat{\sigma}(\pi,i)}{\sigma(\pi,a)}M(i,a)
    \end{split}
\end{align}
We now use induction in the value iteration algorithm \eqref{value_itr}. The algorithm begins with $\mathcal{V}_0(\pi) = \hat{\mathcal{V}}_0(\pi) = 0 \ \forall \pi \in \Pi$, so we trivially have $\mathcal{V}_0(\pi) \geq \hat{\mathcal{V}}_0(\pi)$. 
We also know that the value function $\hat{\mathcal{V}}_k$ is concave for all $\pi \in \Pi$ (see proof of Theorem \ref{thm:sing_thres}), so Jensen's inequality can be invoked to produce
\begin{align}
    \begin{split}
        &\hat{\mathcal{V}}(T(\pi,a)) = \hat{\mathcal{V}}\left(\sum_{i=1}^{\NumAct}\hat{T}(\pi,i)\frac{\hat{\sigma}(\pi,i)}{\sigma(\pi,a)}M(i,a)\right) \\
        &\geq \sum_{i=1}^{\NumAct}\hat{\mathcal{V}}(\hat{T}(\pi,i))\frac{\hat{\sigma}(\pi,i)}{\sigma(\pi,a)}M(i,a)
    \end{split}
\end{align}
Thus we get
\begin{align}
    \begin{split}
        \sum_{a=1}^{\NumAct}\hat{\mathcal{V}}(T(\pi,a))\sigma(\pi,a) \geq \sum_{a=1}^{\NumAct}\hat{\mathcal{V}}(\hat{T}(\pi,a)){\hat{\sigma}}(\pi,a)
    \end{split}
\end{align}

and, assuming $\mathcal{V}_k(\pi) \geq \hat{\mathcal{V}}_k(\pi)$, we have
\begin{align}
    \begin{split}
        &C(\pi,2) + \sum_{a=1}^{\NumAct}\mathcal{V}(T(\pi,a))\sigma(\pi,a) \\
        &\geq C(\pi,2) + \sum_{a=1}^{\NumAct}\hat{\mathcal{V}}(T(\pi,a))\sigma(\pi,a) \\
        &\geq C(\pi,2) + \sum_{a=1}^{\NumAct}\hat{\mathcal{V}}(\hat{T}(\pi,a))\hat{\sigma}(\pi,a)
    \end{split}
\end{align}
Thus $\mathcal{V}_{k+1}(\pi) \geq \hat{\mathcal{V}}_{k+1}(\pi)$ and the induction step is complete. The value iteration algorithm \eqref{value_itr} converges pointwise, so $\mathcal{V}(\pi) \geq \hat{\mathcal{V}}(\pi)$.

\end{proof}

\subsection{Proof of Theorem \ref{thm:convex_dom}}
\label{ap:B}
\begin{proof}
We prove the existence of such a stochastic matrix by construction. First, we have that $\Gamma_1 = \hat{\Gamma}M_1$ and $\Gamma_2 = M_2\hat{\Gamma}$, so
\[\Gamma_1(a) = \sum_{i=1}^N \hat{\Gamma}(i)M_1(i,a), \ \ \  \Gamma_2(a) = \sum_{i=1}^N M_2(a,i)\hat{\Gamma}(i)M_2(i,1)\]
Then 
\begin{align}
    \begin{split}
       & \Gamma_3(a) = \gamma_a\Gamma_1(a) + (1-\gamma_a)\Gamma_2(a) \\
       &= \gamma_a\sum_{i=1}^N \hat{\Gamma}(i)M_1(i,a) + (1-\gamma_a)\sum_{i=1}^N \hat{\Gamma}(i)M_2(i,a) \\
       &= \sum_{i=1}^N \left(\gamma_a M_1(i,a) + (1-\gamma_a)M_2(i,a)\right)\hat{\Gamma}(i)
    \end{split}
\end{align}
Now simply form matrix $M_3$ as 
\[M_3(i,a) =  \gamma_a M_1(i,a) + (1-\gamma_a)M_2(i,a)\]
so that $\Gamma_3 = M_3\hat{\Gamma}$. It is also easily verified that $M_3$ is stochastic, since $M_1$ and $M_2$ are stochastic:
\[\sum_{i=1}^N \gamma_a M_1(i,a) + (1-\gamma_a)M_2(i,a) = 1 \ \forall a \in \{1,\dots,N\}\]
\end{proof}


\ifCLASSOPTIONcaptionsoff
  \newpage
\fi

\bibliographystyle{IEEEtran}
\bibliography{Bibliography.bib}

%








\end{document}